\newtheorem{definition}{Definition}
\newtheorem{theorem}{Theorem}[section]
\newtheorem{remark}[theorem]{Remark}
\newcommand\ii{{\operatorname{i}}}
\begin{document}

\title{Direct approach to \\approximate conservation laws}

\author{M.~Gorgone, G.~Inferrera\\
\ \\
{\footnotesize Department of Mathematical and Computer Sciences,}\\
{\footnotesize Physical Sciences and Earth Sciences, University of Messina}\\
{\footnotesize Viale F. Stagno d'Alcontres 31, 98166 Messina, Italy}\\
{\footnotesize matteo.gorgone@unime.it; guinferrera@unime.it}
}

\date{Published in  \textit{Eur. Phys. J. Plus.}  \textbf{138}, 447 (2023).}

\maketitle

\begin{abstract}
In this paper, non--variational systems of differential equations containing 
small terms are considered, and a consistent approach for deriving approximate 
conservation laws through the introduction of approximate Lagrange multipliers 
is developed. The proposed formulation of the approximate direct method 
starts by assuming the Lagrange multipliers to be dependent on the small 
parameter; then, by expanding the dependent variables in power series of the 
small parameter, we consider the
consistent expansion of all the involved quantities (equations and Lagrange 
multipliers) in such a way the basic principles of perturbation analysis are 
not violated. Consequently, a theorem leading to the determination of 
approximate multipliers whence approximate conservation laws arise is 
proved, and the role of approximate Euler operators emphasized. Some 
applications of the procedure are presented.
\end{abstract}

\noindent
\textbf{PACS.} {02.30.Jr - 02.30.Mv}

\section{Introduction}\label{sec:intro}
The determination of conservation laws associated to mathematical models 
expressed in terms of differential equations
is important from many viewpoints, either analytical or numerical: 
investigation of integrability and linearization, introduction of potentials 
and nonlocally--related systems, analysis of solutions, accuracy of numerical 
methods 
\cite{Bressan,Dafermos,Lax,LeVeque,HairerLubichWanner,Iserlesetal,Ibragimov:CRC,BlumanCheviakovAnco}. 

A powerful and elegant tool for the derivation of conserved quantities is 
given by Lie group analysis of differential equations, and several strategies 
have been widely used 
\cite{Ibragimov:CRC,BlumanCheviakovAnco,Olver,Oliveri2010,GorgoneOliveriSpeciale}. 
In particular, given a system of differential 
equations arising from a variational principle, \emph{i.e.}, derived from a Lagrangian function, 
the Noether's theorem \cite{Noether} allows to obtain algorithmically conservation laws  by using an 
explicit formula involving the infinitesimals of the Lie point symmetries of the Lagrangian action and 
the Lagrangian itself.
The generators of the Lie point symmetries of the Lagrangian action leave 
invariant the extrema of the Lagrangian action and, hence, they are admitted 
as generators of  symmetries of the corresponding Euler--Lagrange equations; 
however, in general it is not guaranteed that Lie point symmetries admitted by 
the Euler--Lagrange equations are variational symmetries.
In \cite{Boyer}, an extension of Noether's theorem to construct conservation 
laws arising from invariance under generalized Lie symmetries 
\cite{BlumanCheviakovAnco,Olver} is provided. 

Despite their relevance, the application of Noether's theorem is limited to 
variational systems and is coordinate--dependent. Nevertheless, there are many evolution equations,  
such as the classical heat equation, the Burgers equation, etc. \cite{AndersonDuchamp}, not admitting a 
Lagrangian function. Thus, for such cases one needs alternative methods. 

In \cite{Ibragimov2007}, in order to investigate non--variational problems, 
the concepts of \emph{formal} Lagrangian and \emph{self--adjointness} have been 
introduced, and an explicit formula allowing to construct conservation laws 
associated to the symmetries of any nonlinearly self--adjoint system of 
equations has been given. In detail, the procedure requires to consider a 
self--adjoint system of a set of (linear and/or nonlinear) differential 
equations and construct a formal Lagrangian for the  starting system together 
with its self--adjoint system. It is proved that the adjoint system inherits 
all the Lie point symmetries of the original system. Therefore, this 
method is within the framework of Lie group analysis too.

In \cite{AncoBluman2002a,AncoBluman2002b}, a direct and algorithmic method 
which does not involve an action principle, and produces conservation laws by 
using the same independent and dependent variables as the original equations, 
has been proposed. This approach consists in solving linear determining 
equations whose unknowns are the so--called multipliers, and each solution of 
these equations leads to a conservation law.

The symmetry--based approaches for the derivation of conservation laws for differential 
equations need some adaptations when dealing with 
differential equations involving terms of different orders of magnitude. 
In fact, the occurrence of small terms has the effect of destroying some 
important symmetries. In the last decades, some approximate symmetry theories 
have been proposed \cite{BGI-1988,FS-1989,DSGO-2018}. In particular, in 
\cite{DSGO-2018}, a new approximate symmetry approach, which is consistent 
with the basic tenets of perturbation analysis \cite{Nayfeh}, and such that the relevant 
properties of exact Lie symmetries of differential
equations are inherited, has been introduced. This method requires to expand 
in power series of the small parameter both the dependent variables, as done in 
classical perturbation analysis, and the Lie generator, so that a new 
approximate invariance with respect to an approximate Lie generator can be 
defined. Some applications of the new approach to approximate Lie symmetries 
of differential equations can be found in 
\cite{Gorgone-2018,GorgoneOliveriEDJE-2018,GorgoneOliveriZAMP-2021,GO-Maths-2021}. 

In \cite{GO-Maths-2021}, as far as the problem of determining approximate 
conservations laws by means of Lie group methods is concerned, the consistent 
approach \cite{DSGO-2018} has been applied to perturbed variational problems, 
stating a new approximate Noether's theorem leading to the construction of 
approximate conserved quantities.

The aim of this paper is to consider differential equations involving small 
terms that do not arise from a variational principle, and consistently extend  
the procedure developed in \cite{AncoBluman2002a,AncoBluman2002b} to an 
approximate context, in order to find approximate conservation laws by a 
direct computation. 

The problem of determining approximate conservation laws of non--varia\-tional 
problems involving a small parameter has been faced in \cite{Jamal-2019}, 
where an expansion of the Lagrange multipliers is introduced leading to 
approximate conservation laws; basically, the approach therein moves within 
the framework of Baikov--Gazizov--Ibragimov method for approximate symmetries 
\cite{BGI-1988}. The Baikov--Gazizov--Ibragimov approach to approximate symmetries is also the framework where Tarayrah \cite{Tara} moves in deriving approximate conservation laws.
In \cite{Jamal-2019}, there have been also considered some examples 
where the dependent variables are expanded too, as the method by Fushchich--Shtelen \cite{FS-1989} does. 

The approach to approximate conservation laws here presented is somehow different and moves 
in the framework already exploited in 
\cite{DSGO-2018,GO-Maths-2021}, where the symmetries of the differential equations involving small terms are
faced in such a way the basic principles of perturbative analysis are preserved. In fact, we expand the dependent variables, and, assuming the 
Lagrange multipliers to be dependent on the small parameter, we accordingly 
expand them. Therefore, the results are somehow different from those that can be recovered 
using the methods described in \cite{Jamal-2019,Tara}.
Remarkably, the method here developed requires the introduction of an 
approximate Euler operator (see \cite{GO-Maths-2021}, where an approximate 
Noether theorem has been stated).

The plan of the paper is as follows. In Section~\ref{sec:2}, for the reader's convenience, a brief sketch of the approximate framework about conservation laws 
and the direct method is given. 
In Section~\ref{sec:3}, after  briefly reviewing the other direct approaches to approximate conservation laws, we introduce the approximate multipliers accordingly to a consistent approach, and a new approximate direct  
procedure is developed. In Section~\ref{sec:4}, a perturbed nonlinear diffusion equation is considered and the admitted approximate conservation laws are computed by means of the various methods with the aim of highlighting the differences. In Section~\ref{sec:5}, some illustrative applications 
of physical interest are presented, and approximate multipliers with the 
corresponding approximate conservation laws are determined. Finally, 
Section~\ref{sec:6} contains our conclusions.

\section{Theoretical framework and notation}\label{sec:2}

In this Section, to keep the paper self--contained and fix the notation, we briefly recall some basic facts 
regarding conservation laws within the approximate framework. 

Let
\begin{equation}
\label{system}
\boldsymbol\Delta\left(\mathbf{x},\mathbf{u}^{(r)};\varepsilon\right)=\mathbf{0}
\end{equation}
be a system of differential equations of order $r$ where some terms include a parameter $\varepsilon\ll 1$;
functions $\boldsymbol\Delta\equiv(\Delta^1,\ldots,\Delta^q)$, assumed to be sufficiently smooth, depend on  the independent 
variables  $\mathbf{x}\equiv(x_1,\ldots,x_n)\in X\subseteq\mathbb{R}^n$, the dependent ones 
$\mathbf{u}^{(0)}=\mathbf{u}\equiv(u_1,\ldots,u_m)\in U\subseteq\mathbb{R}^m$, and the derivatives $\mathbf{u}^{(r)}=\left\{\frac{\partial^{|J|} u_{\alpha}}{\partial x^{j_1}_1\ldots\partial x^{j_n}_n}:\;\alpha=1,\ldots,m,\;|J|=0,\ldots,r\right\}\allowbreak\in U^{(r)}\subseteq\mathbb{R}^N$, with $N=m\binom{n+r}{r}$,  of the latter with respect to the former up to the order $r$, where $J\equiv(j_1,\ldots,j_n)$ is a multi--index and $|J|=j_1+\dots+j_n$.

Differential equations like (\ref{system}) are often investigated by means of 
 perturbative techniques~\cite{Nayfeh}, whence we need to expand the variables $\mathbf{u}^{(r)}$ in power series of $\varepsilon$ up to some finite order $p$, \emph{i.e.},
\begin{equation}
\label{expansion_u_der}
\mathbf{u}^{(r)}(\mathbf{x};\varepsilon)=\sum_{k=0}^p\varepsilon^k \mathbf{u}^{(r)}_{(k)}(\mathbf{x})
+O(\varepsilon^{p+1}),
\end{equation}
with $\mathbf{u}^{(r)}_{(k)}\equiv(u^{(r)}_{(k)1},\ldots,u^{(r)}_{(k)N})$.

Let us recall the notion of \emph{approximate function} \cite{DSGO-2018}.
\begin{definition}[Approximate function]
\label{approx-func}
Let $f(\mathbf{x},\mathbf{u}^{(r)};\varepsilon)$ be a $C^\infty$ function depending on the independent variables $\mathbf{x}\in X\subseteq\mathbb{R}^n$, and the derivatives $\mathbf{u}^{(r)}\in U^{(r)}\subseteq\mathbb{R}^N$ ($\mathbf{u}^{(0)}=\mathbf{u}\in U\subseteq\mathbb{R}^m$), and the small parameter $\varepsilon\in\mathbb{R}$; in the following, we will consider such kind of functions locally in a neighborhood of $\varepsilon=0$.

Using  \eqref{expansion_u_der}, and expanding the function $f(\mathbf{x},\mathbf{u}^{(r)};\varepsilon)$ in power series of $\varepsilon$, we obtain 
\begin{equation}
\begin{aligned}
f(\mathbf{x},\mathbf{u}^{(r)};\varepsilon)&=\sum_{k=0}^p\sum_{|\sigma|=k}
\frac{\varepsilon^{\sigma_0}}{\sigma_0!}\left(\prod_{i=1}^N \frac{(u^{(r)}_i-u^{(r)}_{(0)i})^{\sigma_i}}{\sigma_i!}\right)
\left.\frac{\partial^{|\sigma|}f(\mathbf{x},\mathbf{u}^{(r)};\varepsilon)}{\partial \varepsilon^{\sigma_0}\partial u^{(r)\sigma_1}_1\cdots\partial u^{(r)\sigma_N}_N}\right|_{\varepsilon=0}\\
&+O(\varepsilon^{p+1}),
\end{aligned}
\end{equation}
$\sigma$ being the multi--index $(\sigma_0,\sigma_1,\ldots,\sigma_N)$, and $|\sigma|=\sigma_0+\sigma_1+\dots+
\sigma_N$. 

By means of the positions
\begin{equation}
\begin{aligned}
&{f}_{(0)}(\mathbf{x},\mathbf{u}^{(r)}_{(0)})=\left.f(\mathbf{x},\mathbf{u}^{(r)};\varepsilon)\right|_{\varepsilon=0},\\
&{f}_{(k)}(\mathbf{x},\mathbf{u}^{(r)}_{(0)})=\left.\frac{\partial^k f(\mathbf{x},\mathbf{u}^{(r)};\varepsilon)}{\partial \varepsilon^k}\right|_{\varepsilon=0},
\end{aligned}
\end{equation}
since it is
\begin{equation}
\left.\frac{\partial^{|\sigma|}f(\mathbf{x},\mathbf{u}^{(r)};\varepsilon)}{\partial \varepsilon^{\sigma_0}\partial u_1^{(r)\sigma_1}\cdots\partial 
u_N^{(r)\sigma_N}}\right|_{\varepsilon=0}=\frac{\partial^{|\sigma|-\sigma_0} {f}_{(\sigma_0)}(\mathbf{x},\mathbf{u}^{(r)}_{(0)})}{\partial 
u_{(0)1}^{(r)\sigma_1}\cdots\partial u_{(0)N}^{(r)\sigma_N}},
\end{equation}
we have
\begin{equation}
\label{exp-1}
\begin{aligned}
f(\mathbf{x},\mathbf{u}^{(r)};\varepsilon)&=\sum_{k=0}^p\sum_{|\sigma|=k}
\frac{\varepsilon^{\sigma_0}}{\sigma_0!}\left(\prod_{i=1}^N \frac{(u_i^{(r)}-u_{(0)i}^{(r)})^{\sigma_i}}{\sigma_i!}\right)
\frac{\partial^{|\sigma|-\sigma_0}{f}_{(\sigma_0)}(\mathbf{x},\mathbf{u}^{(r)}_{(0)})}{\partial u_{(0)1}^{(r)\sigma_1}\cdots\partial u_{(0)N}
^{(r)\sigma_N}}\\
&+O(\varepsilon^{p+1});
\end{aligned}
\end{equation}
therefore, the expansion in power series of $\varepsilon$ is characterized (up to the order $p$ in $\varepsilon$) by 
$p+1$ functions depending on $(\mathbf{x},\mathbf{u}^{(r)}_{(0)})$. Such an expansion can be written as
\begin{equation}
\label{exp-2}
\begin{aligned}
&f(\mathbf{x},\mathbf{u}^{(r)};\varepsilon)=\sum_{k=0}^p \varepsilon^k \widetilde{f}_{(k)}\left(\mathbf{x},\mathbf{u}^{(r)}_{(0)},\ldots, \mathbf{u}^{(r)}_{(k)}\right)
+O(\varepsilon^{p+1}),
\end{aligned}
\end{equation}
where $\widetilde{f}_{(k)}$ $(k>0)$ are suitable polynomials in $\mathbf{u}^{(r)}_{(1)},\ldots,\mathbf{u}^{(r)}_{(k)}$ with 
coefficients given by ${f}_{(0)}(\mathbf{x},\mathbf{u}^{(r)}_{(0)})$, \ldots, ${f}_{(k)}(\mathbf{x},\mathbf{u}^{(r)}_{(0)})$ and their derivatives with respect to 
$\mathbf{u}^{(r)}_{(0)}$. More precisely, the functions $\widetilde{f}_{(k)}$  are defined as follows:
\begin{equation}
\begin{aligned}
&\widetilde{f}_{(0)}={f}_{(0)},\\
&\widetilde{f}_{(k+1)}=\frac{1}{k+1}\mathcal{R}[\widetilde{f}_{(k)}],
\end{aligned}
\end{equation} 
$\mathcal{R}$ being a linear recursion operator, satisfying the \emph{Leibniz rule}, defined as
\begin{equation}
\label{R_operator}
\begin{aligned}
&\mathcal{R}\left[\frac{\partial^{|\tau|}{f}_{(k)}(\mathbf{x},\mathbf{u}^{(r)}_{(0)})}{\partial u_{(0)1}^{(r)\tau_1}\dots\partial u_{(0)N}^{(r)\tau_N}}
\right]=\frac{\partial^{|\tau|}{f}_{(k+1)}(\mathbf{x},\mathbf{u}^{(r)}_{(0)})}{\partial u_{(0)1}^{(r)\tau_1}\dots\partial u_{(0)N}^{(r)\tau_N}}\\
&\phantom{\mathcal{R}\left[\frac{\partial^{|\tau|}{f}_{(k)}(\mathbf{x},\mathbf{u}^{(r)}_{(0)})}{\partial u_{(0)1}^{(r)\tau_1}\dots\partial u_{(0)N}^{(r)\tau_N}}
\right]}
+\sum_{i=1}^N\frac{\partial}{\partial u^{(r)}_{(0)i}}\left(\frac{\partial^{|\tau|} {f}_{(k)}(\mathbf{x},\mathbf{u}^{(r)}_{(0)})}{\partial u_{(0)1}
^{(r)\tau_1}\dots\partial u_{(0)N}^{(r)\tau_N}}\right)u^{(r)}_{(1)i},\\
&\mathcal{R}[u^{(r)}_{(k)j}]=(k+1)u^{(r)}_{(k+1)j},
\end{aligned}
\end{equation}
where $k\ge 0$,  $j=1,\ldots, N$, $|\tau|=\tau_1+\cdots+\tau_N$.
\end{definition}

\begin{definition}
Given any couple of smooth functions $f$ and $g$, the notation 
$f\approx g$ means that $f$ and $g$ have the same Taylor expansion up to the order $p$ in $\varepsilon$.
\end{definition}
Thence, by means of Definition \ref{approx-func}, we can define the notion of conservation law in the approximate framework.

\begin{definition}[Approximate conservation law]
\label{approx-claw}
Given a system of differential equations,
\begin{equation}
\label{sysgen}
\boldsymbol{\Delta}\left(\mathbf{x},\mathbf{u}^{(r)};\varepsilon\right)\approx \sum_{k=0}^p\varepsilon^k\widetilde{\boldsymbol\Delta}_{(k)}\left(\mathbf{x},
\mathbf{u}^{(r)}_{(0)},
\ldots,\mathbf{u}^{(r)}_{(k)}\right)=\mathbf{0},
\end{equation}
an \emph{approximate conservation law} of order $r$ compatible with the system \eqref{sysgen} is an approximate divergence expression
\begin{equation}
\label{conservationlaw} 
 \sum_{k=0}^p\varepsilon^k\left(\sum_{i=1}^n D_{i}\left(\widetilde{\Phi}^i_{(k)}\left(\mathbf{x},
\mathbf{u}^{(r-1)}_{(0)},
\ldots,\mathbf{u}^{(r-1)}_{(k)}\right)\right)\right) =O(\varepsilon^{p+1}),
\end{equation}
holding for all solutions of system \eqref{sysgen}, where 
\[
\sum_{k=0}^p\varepsilon^k\widetilde{\boldsymbol\Delta}_{(k)}\left(\mathbf{x},
\mathbf{u}^{(r)}_{(0)},
\ldots,\mathbf{u}^{(r)}_{(k)}\right)
\]
and
\[
\sum_{k=0}^p \varepsilon^k  \widetilde{\Phi}^i_{(k)}\left(\mathbf{x},
\mathbf{u}^{(r-1)}_{(0)},
\ldots,\mathbf{u}^{(r-1)}_{(k)}\right), \qquad i=1,\ldots,n
\]
are the expansions at order $p$ of the differential equations $\boldsymbol{\Delta}\left(\mathbf{x},\mathbf{u}^{(r)};\varepsilon\right)$ and fluxes $\Phi^i\left(\mathbf{x},
\mathbf{u}^{(r-1)};\varepsilon\right)$ of the approximate conservation law, respectively, according to \eqref{exp-1}--\eqref{exp-2}, along with the approximate Lie derivative \cite{DSGO-2018} defined as
\begin{equation}
\label{Lie-derivative}
D_i=\frac{{D}}{{D} x_i}=\frac{\partial}{\partial x_i}+\sum_{k=0}^p\sum_{\alpha=1}^m \left(u_{(k)\alpha,i}
\frac{\partial}{\partial u_{(k)\alpha}}+\sum_{j=1}^n u_{(k)\alpha,ij} \frac{\partial}{\partial u_{(k)\alpha,j}}+\dots\right),
\end{equation}
with $\displaystyle u_{(k)\alpha,i}=\frac{\partial u_{(k)\alpha}}{\partial x_{i}}$, 
$\displaystyle u_{(k)\alpha,ij}=\frac{\partial^2 u_{(k)\alpha}}{\partial x_{i}\partial x_j}$, \ldots
\end{definition}
Some properties about exact conservation laws can be inherited in the approximate context:
\begin{itemize}
\item an approximate conservation law is \emph{trivial} if each of its approximate fluxes is $O(\varepsilon^{p+1})$ on the solutions of the given system of differential equations, or the approximate conservation law is $O(\varepsilon^{p+1})$ identically;
\item two approximate conservation laws are said \emph{equivalent} if their linear combination is a trivial approximate conservation law.
\end{itemize}
The notion of trivial approximate conservation laws allows us for the introduction of \emph{linearly dependent} approximate conservation laws.
\begin{definition}
A finite set $\mathcal{S}$ of approximate conservation laws is linearly dependent if there exists a set of constants, not all zero, such that the linear combination of the elements in $\mathcal{S}$ is trivial. In this case, at least one of the approximate conservation laws in $\mathcal{S}$ can be expressed as a linear combination of the remaining ones. 
\end{definition}
For variational problems containing small terms, \emph{i.e.}, differential equations derived from a perturbed Lagrangian function, the determination of approximate conservation laws can be performed by the approximate Noether's theorem \cite{GO-Maths-2021}, which establishes a correspondence between the approximate Lie symmetries of the perturbed action integral and approximate conservation laws through an explicit formula. 

As said before, the most relevant disadvantage of Noether's procedure for the determination of conservation laws relies on the fact that it applies only to differential equations admitting a Lagrangian formulation.

In general, for non--variational differential equations, conservation laws may be found by means of a direct approach \cite{AncoBluman2002a,AncoBluman2002b}, which makes use of the so--called \emph{multipliers}. 
By reformulating this method in the approximate framework, we have to consider a system like \eqref{system} of differential equations, and look for sets of non--singular (when evaluated on the solutions of the system) multipliers $\Lambda^\nu\left(\mathbf{x},\mathbf{u}^{(r)};\varepsilon\right)$ ($\nu=1,\ldots,q$), that are functions of the independent and dependent variables, as well as derivatives (up to some finite order) that we 
explicitly assume to depend also on $\varepsilon$  such that the linear combination
\begin{equation}
\sum_{\nu=1}^q\left(\Lambda^\nu\left(\mathbf{x},\mathbf{u}^{(r)};\varepsilon\right) {\Delta}^\nu\left(\mathbf{x},\mathbf{u}^{(r)};\varepsilon\right)\right)\equiv\sum_{i=1}^nD_{i}\left({\Phi}^i\left(\mathbf{x},
\mathbf{u}^{(r-1)};\varepsilon\right)\right)\approx 0
\end{equation}
is an approximate divergence expression. 
The crucial requirement of this procedure is that this linear combination must be annihilated by the Euler operators associated to all dependent variables, where the dependent variables and their derivatives are replaced by arbitrary functions depending on the independent variables. Then, all the sets of multipliers linked to conservation laws can be found algorithmically by solving a linear system of determining equations. It is worth of being underlined that the consistent approach we use in this paper requires the introduction of approximate Euler operators \cite{GO-Maths-2021}, as will be shown in Section~\ref{sec:3}.

\section{Direct approaches to approximate conservation laws}
\label{sec:3}
In this Section, we present our consistent direct approach for deriving approximate conservation laws. In order to clarify the differences with the methods in the literature and used by many authors, we shall start by
briefly reviewing the methods which definitely relie on the approach to approximate symmetries proposed by Baikov--Gazizov--Ibragimov \cite{BGI-1988}, where the dependent variables are not expanded, or the one by Fushchich--Shtelen \cite{FS-1989}, where the expansion of the dependent variables is performed. Hereafter, we denote as Approach A the direct method for determining approximate conservation laws within the Baikov--Gazizov--Ibragimov \cite{BGI-1988} framework for approximate symmetries, whereas with Approach B the direct method within the Fushchich--Shtelen \cite{FS-1989} one.

In the Approach A where the expansion of the dependent variables is not considered, and followed in \cite{Jamal-2019,Tara}, given a system of differential equations involving a small parameter 
$\varepsilon$, say 
\begin{equation}
\label{sys-BGI}
\Delta^\nu\left(\mathbf{x},\mathbf{u}^{(r)};\varepsilon\right)\equiv \sum_{k=0}^p\varepsilon^k\Delta^\nu_{(k)}\left(\mathbf{x},\mathbf{u}^{(r)}
\right)=O(\varepsilon^{p+1}),\qquad \nu=1,\ldots,q,
\end{equation}
the expansion of the \emph{Lagrange multipliers} has the form
\begin{equation}
\label{approx-mult-BGI}
\begin{aligned}
\Lambda^\nu\left(\mathbf{x},\mathbf{u}^{(r)};\varepsilon\right)\equiv \sum_{k=0}^p\varepsilon^k\Lambda^\nu_{(k)}\left(\mathbf{x},\mathbf{u}^{(r)}\right),\qquad \nu=1,\ldots,q,
\end{aligned}
\end{equation}
whereas, the \emph{Euler operators} are
\begin{equation}
\label{euler-BGI}
\begin{aligned}
E_{u_{\alpha}}&=\frac{\partial }{\partial u_{\alpha}}
-\sum_{i=1}^nD_{i}\left(\frac{\partial 
}{\partial u_{\alpha,i}}\right)\\
&+\ldots+(-1)^r\sum_{i_1=1}^n\ldots\sum_{i_r=i_{r-1}}^nD_{i_1}\dots D_{i_r}\left(\frac{\partial 
}{\partial u_{\alpha,i_1\dots i_r}}\right),
\end{aligned}
\end{equation}
with $\alpha=1,\ldots,m$. By applying the algorithmic direct procedure, the recovered approximate multipliers (\ref{approx-mult-BGI}) yield an approximate divergence expression for (\ref{sys-BGI})
\begin{equation}
\begin{aligned}
\sum_{k=0}^p\varepsilon^k&\left(\sum_{\ell=0}^k\sum_{\nu=1}^q\Lambda^\nu_{(\ell)}\left(\mathbf{x},\mathbf{u}^{(r)}\right)\Delta^\nu_{(k-\ell)}\left(\mathbf{x},\mathbf{u}^{(r)}
\right)\right)\equiv\\
&\equiv \sum_{k=0}^p\varepsilon^k\sum_{i=1}^nD_{i}\left({\Phi}^i_{(k)}\left(\mathbf{x},
\mathbf{u}^{(r-1)}\right)\right)=O(\varepsilon^{p+1}).
\end{aligned}
\end{equation}
We observe that the expanded Lagrange multipliers are not fully consistent with the principles of perturbation analysis since the dependent variables are not expanded in power series of the small parameter;
as proved in \cite{DSGO-2018} (Remark 4) discussing approximate Lie symmetries, this may lead either to inconsistent or less general results since terms with different orders of magnitude are put together and dealt as terms with the same order in $\varepsilon$.

On the contrary, following the Approach B, if the expansion (\ref{expansion_u_der}) of the dependent variables and their derivatives is inserted in system (\ref{system}), thus obtaining  
\begin{equation}
\Delta^\nu\left(\mathbf{x},\mathbf{u}^{(r)};\varepsilon\right)\approx \sum_{k=0}^p\varepsilon^k\Delta^\nu_{(k)}\left(\mathbf{x},
\mathbf{u}^{(r)}_{(0)},
\ldots,\mathbf{u}^{(r)}_{(k)}\right)=O(\varepsilon^{p+1})
\end{equation}
($\nu=1,\ldots,q,$), and terms at each order of approximation are separated, we obtain a coupled system to be solved in a hierarchy:
\begin{equation}
\label{coupled-sys}
\Delta^\nu_{(k)}\left(\mathbf{x},\mathbf{u}^{(r)}_{(0)},
\ldots,\mathbf{u}^{(r)}_{(k)}\right)=0, \qquad k=0,\ldots,p,\qquad \nu=1,\ldots,q.
\end{equation}
Accordingly to this procedure, 
the \emph{approximate multipliers} of system (\ref{system}) are defined as the \emph{exact multipliers}
\begin{equation}
\label{approx-mult-FS}
\Lambda^\nu_{(k)}\equiv \Lambda^\nu_{(k)}\left(\mathbf{x},\mathbf{u}^{(r)}_{(0)},
\ldots,\mathbf{u}^{(r)}_{(p)}\right), \qquad k=0,\ldots,p,\qquad \nu=1,\ldots,q
\end{equation}
of system (\ref{coupled-sys}) \cite{Jamal-2019}.
In this case, the \emph{Euler operators} have the form:
\begin{equation}
\label{euler-FS}
\begin{aligned}
E_{u_{(k)\alpha}}&=\frac{\partial }{\partial u_{(k)\alpha}}
-\sum_{i=1}^nD_{i}\left(\frac{\partial 
}{\partial u_{(k)\alpha,i}}\right)\\
&+\ldots+(-1)^r\sum_{i_1=1}^n\ldots\sum_{i_r=i_{r-1}}^nD_{i_1}\dots D_{i_r}\left(\frac{\partial 
}{\partial u_{(k)\alpha,i_1\dots i_r}}\right),
\end{aligned}
\end{equation}
with $k=0,\ldots,p\,$ and $\,\alpha=1,\ldots,m$.

Thence, the recovered approximate multipliers (\ref{approx-mult-FS}) 
yield an approximate divergence expression for (\ref{system})
\begin{equation}
\begin{aligned}
\sum_{\nu=1}^q\sum_{k=0}^p&\Lambda^\nu_{(k)}\left(\mathbf{x},\mathbf{u}^{(r)}_{(0)},
\ldots,\mathbf{u}^{(r)}_{(p)}\right)\Delta^\nu_{(k)}\left(\mathbf{x},\mathbf{u}^{(r)}_{(0)},
\ldots,\mathbf{u}^{(r)}_{(k)}\right)\equiv\\
&\equiv\sum_{i=1}^nD_{i}\left({\Phi}^i\left(\mathbf{x},\mathbf{u}^{(r-1)}_{(0)},
\ldots,\mathbf{u}^{(r-1)}_{(p)}\right)\right)=0,
\end{aligned}
\end{equation}
where the \emph{approximate} fluxes ${\Phi}^i\left(\mathbf{x},\mathbf{u}^{(r-1)}_{(0)},
\ldots,\mathbf{u}^{(r-1)}_{(p)}\right)$ are the exact fluxes corresponding to the exact multipliers of the approximate system (\ref{coupled-sys}).
Approach B is of course consistent with the principles of perturbation analysis. However, besides requiring 
a lot of algebra (it needs the use of $m\cdot(p+1)$ Euler operators!), the basic assumption that system (\ref{coupled-sys}) is fully coupled is too strong, since the equations at a level
are not influenced by those at higher levels. In addition, there is no possibility to work in a hierarchy: for
instance, if one computes first order approximate conservation laws, and then searches for second order 
approximate conservation laws, all the work must be done from the very beginning.

Our aim is to introduce a method that, besides being coherent with perturbation analysis, does not require a huge computational cost. Essentially, we combine the direct procedure \cite{AncoBluman2002a,AncoBluman2002b} with the consistent approach to approximate Lie symmetries introduced in \cite{DSGO-2018}.


According to Definition~\ref{approx-func}, let us start introducing the \emph{approximate multipliers}.

\begin{definition}[Approximate multiplier]
Let $\{\Lambda^\nu(\mathbf{x},\mathbf{u}^{(s)};\varepsilon): \nu=1,\ldots,q\}$ be a set of $C^\infty$ functions depending on a small parameter $\varepsilon$.

Functions $\Lambda^\nu(\mathbf{x},\mathbf{u}^{(s)};\varepsilon)$ $(\nu=1,\ldots,q)$ are called \emph{approximate multipliers} depending on $s$--th order derivatives associated to the system \eqref{sysgen} if, after expanding in pertubation series of $\varepsilon$ up to the order $p$, according to \eqref{exp-1}--\eqref{exp-2}, \emph{i.e.},
\begin{equation}
\label{approx-mult-us}
{\Lambda}^\nu\left(\mathbf{x},
\mathbf{u}^{(s)};\varepsilon\right)=\sum_{k=0}^p\varepsilon^k\widetilde{\Lambda}^\nu_{(k)}\left(\mathbf{x},
\mathbf{u}^{(s)}_{(0)},
\ldots,\mathbf{u}^{(s)}_{(k)}\right)+O(\varepsilon^{p+1}),\quad \nu=1,\ldots,q,
\end{equation}
the relation 
\begin{equation}
\sum_{k=0}^p\varepsilon^k\left(\sum_{\ell=0}^k\sum_{\nu=1}^q\left(\widetilde{\Lambda}^\nu_{(\ell)} \widetilde{{\Delta}}_{(k-\ell)}^\nu\right)- \sum_{j=1}^n D_j \widetilde{\Phi}^j_{(k)}\right)=O(\varepsilon^{p+1})
\end{equation} 
holds for arbitrary $\mathbf{u}^{(s)}_{(\ell)}(\mathbf{x})$ and some suitable functions $\widetilde{\Phi}^j_{(k)}\left(\mathbf{x},\mathbf{u}_{(0)}^{(s-1)},\ldots,\mathbf{u}_{(k)}^{(s-1)}\right)$. 

Then, if ${\Lambda}^\nu\left(\mathbf{x},
\mathbf{u}^{(s)};\varepsilon\right)$ are non--singular, an approximate conservation law can be recovered:
\begin{equation}
\label{divergence_form}
\sum_{k=0}^p\varepsilon^k\left(\sum_{\ell=0}^k\sum_{\nu=1}^q\left(\widetilde{\Lambda}^\nu_{(\ell)} \widetilde{{\Delta}}_{(k-\ell)}^\nu\right)\right)\equiv \sum_{k=0}^p\varepsilon^k\sum_{j=1}^n D_j \widetilde{\Phi}^j_{(k)}= O(\varepsilon^{p+1}).
\end{equation} 
\end{definition}
\begin{definition}[Trivial approximate multiplier]
An approximate multiplier $\allowbreak{\Lambda}^\nu\left(\mathbf{x},
\mathbf{u}^{(s)};\varepsilon\right)\approx\sum_{k=0}^p\varepsilon^k\widetilde{\Lambda}^\nu_{(k)}\left(\mathbf{x},
\mathbf{u}^{(s)}_{(0)},
\ldots,\mathbf{u}^{(s)}_{(k)}\right)$ of system \eqref{sysgen} is a \emph{trivial} approximate multiplier if $\widetilde{\Lambda}^\nu_{(0)}\left(\mathbf{x},\mathbf{u}^{(s)}_{(0)}\right)=0$.
\end{definition}
\begin{remark}
If ${\Lambda}^\nu\left(\mathbf{x},
\mathbf{u}^{(s)};\varepsilon\right)\approx\sum_{k=0}^p\varepsilon^k\widetilde{\Lambda}^\nu_{(k)}\left(\mathbf{x},
\mathbf{u}^{(s)}_{(0)},
\ldots,\mathbf{u}^{(s)}_{(k)}\right)$ is a non--trivial approximate multiplier of system \eqref{sysgen}, then $\widetilde{\Lambda}^\nu_{(0)}\left(\mathbf{x},\mathbf{u}^{(s)}_{(0)}\right)$ is an exact multiplier of the unperturbed system ${\boldsymbol\Delta}\left(\mathbf{x},
\mathbf{u}^{(s)}_{(0)};0\right)=\mathbf{0}$. In such a case, we say that the exact multiplier $\widetilde{\Lambda}^\nu_{(0)}\left(\mathbf{x},\mathbf{u}^{(s)}_{(0)}\right)$ is stable with respect to the perturbation considered.
\end{remark}
\begin{remark}
If ${\Lambda}^\nu\left(\mathbf{x},\mathbf{u}^{(s)};\varepsilon\right)$ is an approximate multiplier of system \eqref{sysgen}, then $\varepsilon{\Lambda}^\nu\left(\mathbf{x},
\mathbf{u}^{(s)};\varepsilon\right)$ is an approximate multiplier too, but the converse is not in general true.
\end{remark}
Since we are interested in determining approximate multipliers yielding non--trivial approximate conservation laws, it is necessary that the system \eqref{sysgen} is solved with respect to some leading derivatives, \emph{i.e.}, it is in Cauchy--Kovalevskaya
form \cite{BlumanCheviakovAnco}.

As it is well known for unperturbed differential equations, the direct approach requires to use the Euler operators because they annihilate any divergence expression \cite{BlumanCheviakovAnco}. Therefore, in the approximate framework, we need the conditions allowing us to recover the approximate multipliers, and a new concept of approximate Euler operator \cite{GO-Maths-2021} has to be defined. Then, we can state the following theorem.
\begin{theorem}
\label{teo1}
A set of non--singular approximate multipliers $\{\Lambda^\nu(\mathbf{x},\mathbf{u}^{(r)};\varepsilon): \nu=1,\ldots,q\}$ yields an approximate conservation law for the system \eqref{sysgen} if and only if the set of relations
\begin{equation}
\label{euler-cond}
E_{u_{(0)\alpha}}\left(\sum_{k=0}^p\varepsilon^k\left(\sum_{\ell=0}^k\sum_{\nu=1}^q\left(\widetilde{\Lambda}^\nu_{(\ell)} \widetilde{{\Delta}}_{(k-\ell)}^\nu\right)\right)\right)\approx 0,\qquad \alpha=1,\ldots,m
\end{equation}
holds for arbitrary $\mathbf{u}^{(r)}_{(k)}(\mathbf{x})$ $(k=0,\ldots,p)$, where 
\begin{equation}
\label{euler-approx}
\begin{aligned}
E_{u_{(0)\alpha}}&=\frac{\partial }{\partial u_{(0)\alpha}}
-\sum_{i=1}^nD_{i}\left(\frac{\partial 
}{\partial u_{(0)\alpha,i}}\right)\\
&+\ldots+(-1)^r\sum_{i_1=1}^n\ldots\sum_{i_r=i_{r-1}}^nD_{i_1}\dots D_{i_r}\left(\frac{\partial 
}{\partial u_{(0)\alpha,i_1\dots i_r}}\right)
\end{aligned}
\end{equation}
are the \emph{approximate Euler operators}.

\end{theorem}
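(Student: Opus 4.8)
The plan is to reduce the approximate equivalence to the classical characterization of total divergences through Euler operators, exploiting the fact that every operation in play commutes with Taylor--expansion in $\varepsilon$. Set $f(\mathbf{x},\mathbf{u}^{(r)};\varepsilon):=\sum_{\nu=1}^{q}\Lambda^{\nu}\Delta^{\nu}$, a $C^{\infty}$ function near $\varepsilon=0$, and note that, by the Cauchy product rule applied to the expansions \eqref{exp-1}--\eqref{exp-2}, its approximate expansion is precisely $\widetilde f\approx\sum_{k=0}^{p}\varepsilon^{k}\sum_{\ell=0}^{k}\sum_{\nu=1}^{q}\widetilde\Lambda^{\nu}_{(\ell)}\widetilde\Delta^{\nu}_{(k-\ell)}$, i.e.\ the argument of the Euler operator in \eqref{euler-cond}. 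Two commutation properties underlie everything. First, the approximate total derivative is compatible with the expansion, $\widetilde{D_{i}^{\mathrm{ex}}g}\approx D_{i}\widetilde g$ for every approximate function $g$, where $D_{i}^{\mathrm{ex}}$ is the ordinary total derivative (with $\varepsilon$ frozen) and $D_{i}$ the operator \eqref{Lie-derivative}; this is built into the very definition of the approximate Lie derivative in \cite{DSGO-2018}, and reflects that $x$-differentiation does not touch $\varepsilon$. Second, the approximate Euler operator intertwines with the ordinary one, $\widetilde{E_{u_{\alpha}}g}\approx E_{u_{(0)\alpha}}\widetilde g$; this follows from the first property together with the chain-rule identity $\partial\widetilde g/\partial u^{(r)}_{(0)\beta}\approx\widetilde{\partial g/\partial u^{(r)}_{\beta}}$ obtained by differentiating $g\bigl(\mathbf{x},\sum_{k}\varepsilon^{k}\mathbf{u}^{(r)}_{(k)};\varepsilon\bigr)$ with respect to each zeroth-order jet coordinate. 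I expect the second property to be the main point: it is precisely what makes the single family $\{E_{u_{(0)\alpha}}\}_{\alpha=1}^{m}$ suffice, in contrast with the $(p+1)m$ operators used in Approach~B.

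Granting the two commutation properties, necessity is immediate. If $\{\Lambda^{\nu}\}$ yields an approximate conservation law, then $\widetilde f\approx\sum_{k=0}^{p}\varepsilon^{k}\sum_{j=1}^{n}D_{j}\widetilde\Phi^{j}_{(k)}$; assembling the $\widetilde\Phi^{j}_{(k)}$ into an approximate function $\Phi^{j}(\mathbf{x},\mathbf{u}^{(r-1)};\varepsilon)$ and invoking the first property, the right-hand side is the expansion of the ordinary divergence $\sum_{j}D_{j}^{\mathrm{ex}}\Phi^{j}$. Since the classical Euler operator annihilates divergences \cite{BlumanCheviakovAnco,Olver}, $E_{u_{\alpha}}\bigl(\sum_{j}D_{j}^{\mathrm{ex}}\Phi^{j}\bigr)=0$; expanding this identity and using the second property gives $E_{u_{(0)\alpha}}\widetilde f\approx E_{u_{(0)\alpha}}\widetilde{\bigl(\sum_{j}D_{j}^{\mathrm{ex}}\Phi^{j}\bigr)}\approx\widetilde{E_{u_{\alpha}}\bigl(\sum_{j}D_{j}^{\mathrm{ex}}\Phi^{j}\bigr)}\approx 0$, which is \eqref{euler-cond} (one also uses here that $E_{u_{(0)\alpha}}$ maps $O(\varepsilon^{p+1})$ to $O(\varepsilon^{p+1})$, since it introduces no powers of $\varepsilon$).

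For sufficiency, assume \eqref{euler-cond}. By the second property this reads $\widetilde{E_{u_{\alpha}}f}\approx 0$, i.e.\ the exact function $E_{u_{\alpha}}f$ becomes $O(\varepsilon^{p+1})$ once the jet variables are replaced by their truncated expansions with \emph{arbitrary} coefficients $\mathbf{u}^{(r)}_{(k)}(\mathbf{x})$. Matching powers of $\varepsilon$ order by order and using the recursion structure \eqref{R_operator} of the coefficients $\widetilde{(\cdot)}_{(k)}$ — at step $k$ the cross terms involve only $u_{(0)}$-derivatives of $E_{u_{\alpha}}f_{(0)},\dots,E_{u_{\alpha}}f_{(k-1)}$, already shown to vanish — one obtains $E_{u_{\alpha}}f_{(k)}=0$ for $k=0,\dots,p$, where $f_{(k)}=\partial_{\varepsilon}^{k}f|_{\varepsilon=0}$ are ordinary ($\varepsilon$-free) smooth functions. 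By the converse part of the classical result each $f_{(k)}$ is then an ordinary total divergence, $f_{(k)}=\sum_{j}D_{j}^{\mathrm{ex}}\psi^{j}_{(k)}$, hence $f\approx\sum_{j}D_{j}^{\mathrm{ex}}\Phi^{j}$ with $\Phi^{j}:=\sum_{k=0}^{p}\frac{\varepsilon^{k}}{k!}\psi^{j}_{(k)}$; expanding this relation via the first property produces exactly the approximate divergence \eqref{divergence_form}. The non-singularity hypothesis enters only at this final stage, ensuring that the divergence so obtained is a genuine, non-degenerate approximate conservation law on the solution set of \eqref{sysgen}, and plays no role in the equivalence itself. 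Finally, one should note that the whole argument presupposes \eqref{sysgen} to be in Cauchy--Kovalevskaya form, which is what legitimizes treating the expansion coefficients as independent jet variables and organizing the computation as a hierarchy in $\varepsilon$.
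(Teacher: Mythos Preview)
Your argument is correct, and it takes a genuinely different route from the paper. The paper does not really prove the biconditional at all: it declares that ``Theorem~\ref{teo1} just extends the application of the direct method for unperturbed differential equations to the approximate context'' and then spends the proof \emph{deriving} the form of the approximate Euler operator~\eqref{euler-approx}. It does this variationally: take a perturbed Lagrangian $\mathcal{L}(\mathbf{x},\mathbf{u}^{(r)};\varepsilon)$, expand it, and perform an infinitesimal variation in which only $\mathbf{u}_{(0)}$ is perturbed while $\delta\mathbf{u}_{(k)}=0$ for $k>0$; the resulting first variation, after integration by parts, produces precisely the operators $E_{u_{(0)\alpha}}$. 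The punchline is conceptual rather than logical: varying only the zeroth-order field singles out the $m$ operators in~\eqref{euler-approx}, which explains why $(p{+}1)m$ operators as in Approach~B are not needed.

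Your proof, by contrast, does the actual analytic work of the ``if and only if''. You isolate the two commutation identities $\widetilde{D_i^{\mathrm{ex}}g}\approx D_i\widetilde g$ and $\widetilde{E_{u_\alpha}g}\approx E_{u_{(0)\alpha}}\widetilde g$, and then transport both directions of the classical characterization of total divergences through the expansion map $\widetilde{(\cdot)}$. Your order-by-order induction in the sufficiency step (using the recursion~\eqref{R_operator} to kill the cross terms) is the key technical ingredient the paper omits. What the paper's approach buys is a motivation tying the operator~\eqref{euler-approx} to the approximate Euler--Lagrange equations of~\cite{GO-Maths-2021}; what yours buys is an honest proof of the theorem as stated, and a transparent mechanism (the intertwining of $E_{u_\alpha}$ with $E_{u_{(0)\alpha}}$) for why the single family of operators suffices.
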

\begin{proof}
Theorem \ref{teo1} just extends the application of the direct method for unperturbed differential equations to the approximate context.
We only have to prove how the approximate Euler operators \eqref{euler-approx} are derived.

In order to achieve our purpose, let $\mathcal{L}\left(\mathbf{x},\mathbf{u}^{(r)};\varepsilon\right)$ be a perturbed $r$-th order Lagrangian function that, expanding in power series of $\varepsilon$, according to \eqref{exp-1}--\eqref{exp-2}, reads
\begin{equation}
\label{lag-expansion}
\mathcal{L}\left(\mathbf{x},\mathbf{u}^{(r)};\varepsilon\right)
\approx \sum_{k=0}^p\varepsilon^k \widetilde{\mathcal{L}}_{(k)}\left(\mathbf{x},\mathbf{u}_{(0)}^{(r)},\ldots,\mathbf{u}_{(k)}^{(r)}\right).
\end{equation}
Then, let us consider the corresponding perturbed Lagrangian action
\begin{equation}
\label{action}
\begin{aligned}
\mathcal{J}\left(\mathbf{x},\mathbf{u}^{(r)};\varepsilon\right)&= \int_\Omega
\mathcal{L}\left(\mathbf{x},\mathbf{u}^{(r)};\varepsilon\right)d\mathbf{x} \approx\\
&\approx\int_\Omega \left(
\sum_{k=0}^p\varepsilon^k\widetilde{\mathcal{L}}_{(k)}\left(\mathbf{x},\mathbf{u}^{(r)}_{(0)},\ldots,\mathbf{u}^{(r)}_{(k)}\right)\right)d\mathbf{x},
\end{aligned}
\end{equation}
defined in a domain $\Omega$, and take an infinitesimal variation of $\mathbf{u}^{(r)}$ given by
\begin{equation}
\mathbf{u}^{(r)}\longrightarrow\mathbf{u}^{(r)}+\delta\mathbf{u}^{(r)},
\end{equation}
\emph{i.e.}, by means of \eqref{expansion_u_der}, 
\begin{equation}
\mathbf{u}_{(k)}^{(r)}\longrightarrow\mathbf{u}_{(k)}^{(r)}+\delta\mathbf{u}_{(k)}^{(r)},\qquad k=0,\ldots,p,
\end{equation}
where $\delta\mathbf{u}_{(k)}^{(r)}=0$ for $k>0$, and $\delta\mathbf{u}^{(\ell)}_{(0)}=0$ for $\ell=0,\ldots,r-1$ on the boundary of $\Omega$. 

By requiring the 
first variation of the approximate Lagrangian action \eqref{action} to be $O(\varepsilon^{p+1})$ under variations of order $O(\varepsilon^{p+1})$ in $\partial\Omega$, we obtain
\begin{equation}
\label{first_variation}
\begin{aligned}
\delta \mathcal{J}&=\mathcal{J}\left(\mathbf{x},\mathbf{u}^{(r)}+\delta \mathbf{u}^{(r)};\varepsilon\right)-\mathcal{J}\left(\mathbf{x},\mathbf{u}^{(r)};\varepsilon\right)\approx\\
&\approx\int_{\Omega}\left(\sum_{k=0}^p\varepsilon^k\left(\widetilde{\mathcal{L}}_{(k)}\left(\mathbf{x},\mathbf{u}_{(0)}^{(r)}+\delta \mathbf{u}_{(0)}^{(r)},\mathbf{u}_{(1)}^{(r)},\ldots,\mathbf{u}_{(k)}^{(r)}\right)\right.\right.\\
&-\left.\left.\widetilde{\mathcal{L}}_{(k)}\left(\mathbf{x},\mathbf{u}_{(0)}^{(r)},\mathbf{u}_{(1)}^{(r)},\dots,\mathbf{u}_{(k)}^{(r)}\right)\right)\right)d\mathbf{x}=\\
&=\int_{\Omega}\left(\sum_{k=0}^p\varepsilon^k\left(\sum_{\alpha=1}^{m}\left(\frac{\partial\widetilde{\mathcal{L}}_{(k)}}{\partial u_{(0)\alpha}}\delta u_{(0)\alpha}+\sum_{i=1}^{n}\frac{\partial\widetilde{\mathcal{L}}_{(k)}}{\partial u_{(0)\alpha,i}}\delta u_{(0)\alpha,i}\right.\right.\right.\\
&\left.+\left.\left.\ldots+\sum_{i_1=1}^{n}\dots\sum_{i_r=i_{r-1}}^{n}\frac{\partial\widetilde{\mathcal{L}}_{(k)}}{\partial u_{(0)\alpha,i_1\ldots i_r}}\delta u_{(0)\alpha,i_1\ldots i_r}\right)\right)\right)d\mathbf{x}.
\end{aligned}
\end{equation}
Repeatedly integrating by parts, relation \eqref{first_variation} provides
\begin{equation}
\label{first_variation_2}
\delta \mathcal{J}\approx\int_{\Omega}\left(\sum_{\alpha=1}^{m}\left(E_{u_{(0)\alpha}}\left(\sum_{k=0}^{p}\varepsilon^k\widetilde{\mathcal{L}}_{(k)}\right)\delta u_{(0)\alpha}\right)+\sum_{i=1}^{n}D_i W^i\right)d\mathbf{x},
\end{equation}
where $E_{u_{(0)\alpha}}$ are the approximate Euler operators defined in \eqref{euler-approx}, and 
\begin{equation}
\begin{aligned}
W^i&\equiv W^i\left(\mathbf{x},\mathbf{u}^{(r)}_{(0)},\ldots,\mathbf{u}^{(r)}_{(k)},\delta\mathbf{u}_{(0)}^{(r)};\varepsilon\right)=\\
&=\sum_{k=0}^{p}\varepsilon^k\left(\sum_{\alpha=1}^{m}\left(\left(\frac{\partial \mathcal{\widetilde{L}}_{(k)}}{\partial u_{(0)\alpha,i}}+\ldots+(-1)^{r-1}\sum_{j_1=1}^{n}\ldots\right.\right.\right.\\ 
&\left.\ldots\sum_{j_{r-1}=j_{r-2}}^{n} D_{j_1\dots j_{r-1}}\frac{\partial \mathcal{\widetilde{L}}_{(k)}}{\partial u_{(0)\alpha,i j_1\dots j_{r-1}}}\right)\delta u_{(0)\alpha}\\
&+\sum_{j_1=1}^{n}\left(\frac{\partial \mathcal{\widetilde{L}}_{(k)}}{\partial u_{(0)\alpha,ij_1}}+\ldots+(-1)^{r-2}\sum_{j_2=j_{1}}^{n}\ldots\right.\\
&\left.\ldots \sum_{j_{r-1}=j_{r-2}}^{n}D_{j_2\dots j_{r-1}}\frac{\partial \mathcal{\widetilde{L}}_{(k)}}{\partial u_{(0)\alpha,i j_1\dots j_{r-1}}}\right)\delta u_{(0)\alpha,j_1}\\
&+\left.\left.\ldots+\sum_{j_1=1}^{n}\ldots \sum_{j_{r-1}=j_{r-2}}^{n}\frac{\partial \mathcal{\widetilde{L}}_{(k)}}{\partial u_{(0)\alpha,i j_1\dots j_{r-1}}}\delta u_{(0)\alpha,j_1\ldots j_{r-1}}\right)\right).
\end{aligned}
\end{equation}
Then, by applying the divergence theorem to relation \eqref{first_variation_2}, the first variation of the approximate Lagrangian action reads
\begin{equation}
\begin{aligned}
\delta \mathcal{J}&\approx\int_{\Omega}\left(\sum_{\alpha=1}^{m}\left(E_{u_{(0)\alpha}}\left(\sum_{k=0}^{p}\varepsilon^k\widetilde{\mathcal{L}}_{(k)}\right)\delta u_{(0)\alpha}\right)+\sum_{i=1}^{n}D_i W^i\right)d\mathbf{x}=\\
&=\int_{\Omega}\sum_{\alpha=1}^{m}\left(E_{u_{(0)\alpha}}\left(\sum_{k=0}^{p}\varepsilon^k\widetilde{\mathcal{L}}_{(k)}\right)\delta u_{(0)\alpha}\right)d\mathbf{x}+\int_{\partial\Omega}\sum_{i=1}^{n} W^i \phi_i dS,
\end{aligned}
\end{equation}
where $dS$ is the infinitesimal element of $\partial\Omega$, and $\boldsymbol{\phi}=(\phi_1,\ldots,\phi_n)$ the unit outward normal vector to $\partial\Omega$.

Since $\delta u_{(0)\alpha},\delta u_{(0)\alpha,j_1},\ldots,\delta u_{(0)\alpha,j_1\ldots j_{r-1}}$ vanish on $\partial\Omega$, we recover the approximate Euler--Lagrange equations:
\begin{equation}
\begin{aligned}
\sum_{k=0}^p&\varepsilon^k\left(\frac{\partial \widetilde{\mathcal{L}}_{(k)}}{\partial u_{(0)\alpha}}
-\sum_{i=1}^nD_{i}\left(\frac{\partial \widetilde{\mathcal{L}}_{(k)}
}{\partial u_{(0)\alpha,i}}\right)\right.\\
&\left.\quad+\ldots+(-1)^r\sum_{i_1=1}^n\ldots\sum_{i_r=i_{r-1}}^nD_{i_1}\dots D_{i_r}\left(\frac{\partial \widetilde{\mathcal{L}}_{(k)}
}{\partial u_{(0)\alpha,i_1\dots i_r}}\right)\right)\approx 0,
\end{aligned}
\end{equation}
with $\alpha=1,\ldots,m$.
\end{proof}

\begin{remark}
It is worth of being remarked that the computational approach for the approximate multipliers (\ref{approx-mult-us}) is different from the definitions (\ref{approx-mult-BGI}) and (\ref{approx-mult-FS}) in  Approach A \cite{Jamal-2019,Tara} and Approach B \cite{Jamal-2019}, respectively. In fact, in the proposed method, we expand the Lagrange multipliers according to \eqref{expansion_u_der}, \eqref{exp-1}, \eqref{exp-2}, as perturbation analysis requires, and introduce a new definition of approximate Euler operator (\ref{euler-approx}) that, compared to (\ref{euler-BGI}) and (\ref{euler-FS}), allows us to keeping all the elegant and not-expensive features of the Approach A (in particular, the number of the involved Euler operators is equal to the number of the dependent variables).
\end{remark}
Conditions \eqref{euler-cond} can be separated at each order of approximation in $\varepsilon$ and, due to the arbitrariness of the variables $\mathbf{u}^{(r)}_{(k)}$ $(k=0,\ldots,p)$, can be split into an overdetermined linear system of differential equations for the unknown  approximate multipliers. As a consequence, the linear combinations of the recovered approximate multipliers with the perturbed differential equations provide expressions in the approximate divergence form \eqref{divergence_form}. 
Last but not the least, it is not always possible to construct the approximate fluxes corresponding to a given set of approximate multipliers. In fact, this task can be faced as the problem of inversion of the divergence differential operator. There are several methods to determine the approximate fluxes starting from a known set of
approximate multipliers \cite{BlumanCheviakovAnco}.
Finally, it is worth of being underlined that, if the perturbed system \eqref{sysgen} arises from a variational principle, then the approximate direct method provides the approximate conservation laws derived by means of the approximate Noether procedure \cite{GO-Maths-2021}. In the next Section, we compare and discuss the results obtainable by using the different 
approaches considering a perturbed nonlinear diffusion equation.

\section{A comparison of the different methods}
\label{sec:4}
Let us consider the equation
\begin{equation}
\label{diff-eq}
\Delta(t,x,u,u_{,t},u_{,x},u_{,xx};\varepsilon)\equiv u_{,t}-(u^{-2}u_{,x})_{,x}-\varepsilon (u-u^{-1})_{,x}=0,
\end{equation}
where $u\equiv u(t,x;\varepsilon)$, $\varepsilon$ is the small parameter, and the subscripts denote partial derivatives,
and compute the first order approximate multipliers and the corresponding approximate conservation laws.

Equation \eqref{diff-eq} has already been investigated in \cite{Tara}, where
looking for the approximate multipliers $\Lambda(t,x,u;\varepsilon)$ of the form
\begin{equation}
\Lambda=\Lambda_{(0)}+\varepsilon\Lambda_{(1)},
\end{equation}
with $\Lambda_{(k)}\equiv\Lambda_{(k)}\left(t,x,u\right)$, $k=0,1$, and using
the second order Euler operator
\begin{equation}
\begin{aligned}
E_{u}&=\frac{\partial }{\partial u}
-D_{t}\left(\frac{\partial 
}{\partial u_{,t}}\right)-D_{x}\left(\frac{\partial 
}{\partial u_{,x}}\right)\\
&+D_{t}D_{t}\left(\frac{\partial 
}{\partial u_{,tt}}\right)+D_{t}D_{x}\left(\frac{\partial 
}{\partial u_{,tx}}\right)+D_{x}D_{x}\left(\frac{\partial 
}{\partial u_{,xx}}\right),
\end{aligned}
\end{equation}
it has been obtained the overdetermined system
\begin{equation}
E_{u}\left(\Lambda\Delta\right)=O(\varepsilon^2), 
\end{equation}
\emph{i.e.},
\begin{equation}
E_{u}\left((\Lambda_{(0)}+\varepsilon\Lambda_{(1)})(\Delta_{(0)}+\varepsilon\Delta_{(1)})\right)=O(\varepsilon^2).
\end{equation}
Then terms are separated at the various powers of $\varepsilon$, providing
\begin{equation}
\label{eu-BGI}
E_{u}\left(\Lambda_{(0)}\Delta_{(0)}\right)=0,\qquad E_{u}\left(\Lambda_{(0)}\Delta_{(1)}+\Lambda_{(1)}\Delta_{(0)}\right)=0,
\end{equation}
where
\begin{equation}
\Delta_{(0)}=u_{,t}-(u^{-2}u_{,x})_{,x},\qquad
\Delta_{(1)}=-(u-u^{-1})_{,x}.
\end{equation}
By solving \eqref{eu-BGI}, the following sets of approximate multipliers 
along with the associated approximate fluxes have been determined:
\begin{itemize}
\item
\begin{equation}
\label{results-BGI-1}
\Lambda^{1}=1,\qquad
\Phi^t_1=u,\qquad \Phi^x_1=-u^{-2}u_{,x} -\varepsilon\left(u-u^{-1}\right);
\end{equation}
\item
\begin{equation}
\label{results-BGI-2}
\begin{aligned}
&\Lambda^{2}=x + \varepsilon\left(t+ \frac{x^2}{2}\right),\qquad \Phi^t_2=xu + \varepsilon\left(t+ \frac{x^2}{2}\right)u,\\
&\Phi^x_2=-xu^{-2}u_{,x}- u^{-1} -\varepsilon\left(\left(t+ \frac{x^2}{2}\right)u^{-2}u_{,x}+xu\right).
\end{aligned}
\end{equation}
\end{itemize}
Also, the remaining approximate multipliers do not produce new independent approximate conservation laws because
\begin{equation}
\Lambda^{j+2}=\varepsilon\Lambda^{j},\qquad
\Phi^t_{j+2}=\varepsilon\Phi^t_j,\qquad \Phi^x_{j+2}=\varepsilon\Phi^x_j,\qquad j=1,2.
\end{equation}
Different results can be obtained by prior expansion of dependent variables. In fact,
substituting 
\begin{equation}
\label{expansion-example}
u(t,x;\varepsilon)=u_{(0)}(t,x)+\varepsilon u_{(1)}(t,x)+ O(\varepsilon^2),
\end{equation}
into \eqref{diff-eq}, and separating  terms at the various powers of $\varepsilon$, we obtain the coupled system
\begin{equation}
\begin{aligned}
&\Delta_{(0)}(t,x,u_{(0)},u_{(0),t},u_{(0),x},u_{(0),xx})=0\\
&\Delta_{(1)}(t,x,u_{(0)},u_{(1)},u_{(0),t},u_{(0),x},u_{(1),t},u_{(1),x},u_{(0),xx},
u_{(1),xx})=0,
\end{aligned}
\end{equation}
where
\begin{equation}
\label{pert-diff}
\begin{aligned}
\Delta_{(0)}&\equiv u_{(0),t}-u_{(0)}^{-2}u_{(0),xx}+2u_{(0)}^{-3}u_{(0),x}^2=0,\\
\Delta_{(1)}&\equiv u_{(1),t}+2u_{(0)}^{-3}u_{(1)}u_{(0),xx}-u_{(0)}^{-2}u_{(1),xx}-6u_{(0)}^{-4}u_{(1)}u_{(0),x}^2\\
&+4u_{(0)}^{-3}u_{(0),x}u_{(1),x}-(u_{(0)}^{-2}+1)u_{(0),x}=0.
\end{aligned}
\end{equation}
Thence,  the approximate multipliers of Eq. \eqref{diff-eq} are defined as the exact multipliers of system \eqref{pert-diff}; here, we look for multipliers $\Lambda_{(k)}$ ($k=0,1)$ of the form
\begin{equation}
\Lambda_{(0)}\equiv \Lambda_{(0)}(t,x,u_{(0)},u_{(1)}),\qquad
\Lambda_{(1)}\equiv \Lambda_{(1)}(t,x,u_{(0)},u_{(1)}).
\end{equation}
Introducing the Euler operators with respect to $u_{(0)}$ and $u_{(1)}$, say
\begin{equation}
\label{euler-FS-examp}
\begin{aligned}
E_{u_{(0)}}&=\frac{\partial }{\partial u_{(0)}}
-D_{t}\left(\frac{\partial 
}{\partial u_{(0),t}}\right)-D_{x}\left(\frac{\partial 
}{\partial u_{(0),x}}\right)\\
&+D_{t}D_{t}\left(\frac{\partial 
}{\partial u_{(0),tt}}\right)+D_{t}D_{x}\left(\frac{\partial 
}{\partial u_{(0),tx}}\right)+D_{x}D_{x}\left(\frac{\partial 
}{\partial u_{(0),xx}}\right),\\
E_{u_{(1)}}&=\frac{\partial }{\partial u_{(1)}}
-D_{t}\left(\frac{\partial 
}{\partial u_{(1),t}}\right)-D_{x}\left(\frac{\partial 
}{\partial u_{(1),x}}\right)\\
&+D_{t}D_{t}\left(\frac{\partial 
}{\partial u_{(1),tt}}\right)+D_{t}D_{x}\left(\frac{\partial 
}{\partial u_{(1),tx}}\right)+D_{x}D_{x}\left(\frac{\partial 
}{\partial u_{(1),xx}}\right),
\end{aligned}
\end{equation}
and requiring
\begin{equation}
\label{deteqs-FS}
E_{u_{(0)}}(\Lambda_{(0)}\Delta_{(0)}+\Lambda_{(1)}\Delta_{(1)})=0,\qquad E_{u_{(1)}}(\Lambda_{(0)}\Delta_{(0)}+\Lambda_{(1)}\Delta_{(1)})=0,
\end{equation}
we obtain the following multipliers for \eqref{pert-diff} along with the associated fluxes:
\begin{itemize}
\item 
\begin{equation}
\label{results-FS-1}
\begin{aligned}
&\Lambda^{1}_{(0)}=t+\frac{x^2}{2},\qquad \Lambda^{1}_{(1)}=x,\\ &\Phi^t_1=\left(t+\frac{x^2}{2}\right)u_{(0)}+xu_{(1)},\\
&\Phi^x_1=\left(-\left(t+\frac{x^2}{2}\right)u_{(0)}^{-2}+2xu_{(0)}^{-3}u_{(1)}\right)u_{(0),x}\\
&\phantom{\Phi^x_1}-xu_{(0)}^{-2}u_{(1),x}-xu_{(0)}+u_{(0)}^{-2}u_{(1)};
\end{aligned}
\end{equation}
\item 
\begin{equation}
\label{results-FS-2}
\begin{aligned}
&\Lambda^{2}_{(0)}=x,\qquad \Lambda^{2}_{(1)}=0,\\
&\Phi^t_2=xu_{(0)},\qquad \Phi^x_2=-\left(x u_{(0)}^{-2}u_{(0),x}+u_{(0)}^{-1}\right);
\end{aligned}
\end{equation}

\item 
\begin{equation}
\label{results-FS-3}
\begin{aligned}
&\Lambda^{3}_{(0)}=1,\qquad \Lambda^{3}_{(1)}=0,\\
&\Phi^t_3=u_{(0)},\qquad
\Phi^x_3=- u_{(0)}^{-2}u_{(0),x};
\end{aligned}
\end{equation}

\item 
\begin{equation}
\label{results-FS-4}
\begin{aligned}
&\Lambda^{4}_{(0)}=0,\qquad \Lambda^{4}_{(1)}=1,\\
&\Phi^t_4=u_{(1)},\qquad \Phi^x_4=2u_{(0)}^{-3}u_{(1)}u_{(0),x}-u_{(0)}^{-2}u_{(1),x}-u_{(0)}+u_{(0)}^{-1}.
\end{aligned}
\end{equation}
\end{itemize}


Now, let us apply our direct procedure for approximate conservation laws. Therefore, let us use the expansion \eqref{expansion-example} at first order in $\varepsilon$, and look for approximate multipliers $\Lambda(t,x,u;\varepsilon)$ of the form
\begin{equation}
\begin{aligned}
\Lambda&=\Lambda_{(0)}+\varepsilon\left(\Lambda_{(1)}+
\frac{\partial \Lambda_{(0)}}{\partial u_{(0)}}u_{(1)} \right),
\end{aligned}
\end{equation}
where $\Lambda_{(k)}\equiv\Lambda_{(k)}\left(t,x,u_{(0)}\right)$, $k=0,1$.

By using Theorem~\ref{teo1}, we obtain the overdetermined system
\begin{equation}
E_{u_{(0)}}\left(\Lambda\Delta\right)=O(\varepsilon^2), 
\end{equation}
\emph{i.e.},
\begin{equation}\label{deteqs-approx-example}
E_{u_{(0)}}\left(\left(\Lambda_{(0)}+\varepsilon\left(\Lambda_{(1)}+
\frac{\partial \Lambda_{(0)}}{\partial u_{(0)}}u_{(1)} \right)\right)\left(\Delta_{(0)}+\varepsilon\Delta_{(1)}\right)\right)=O(\varepsilon^2), 
\end{equation}
with $\Delta_{(0)}$ and $\Delta_{(1)}$ given in \eqref{pert-diff}.
Differently from \eqref{euler-FS-examp}, the only one approximate Euler operator to be used is
\begin{equation}
\begin{aligned}
E_{u_{(0)}}&=\frac{\partial }{\partial u_{(0)}}
-D_{t}\left(\frac{\partial 
}{\partial u_{(0),t}}\right)-D_{x}\left(\frac{\partial 
}{\partial u_{(0),x}}\right)\\
&+D_{t}D_{t}\left(\frac{\partial 
}{\partial u_{(0),tt}}\right)+D_{t}D_{x}\left(\frac{\partial 
}{\partial u_{(0),tx}}\right)+D_{x}D_{x}\left(\frac{\partial 
}{\partial u_{(0),xx}}\right).
\end{aligned}
\end{equation}
Then, separation of terms of system \eqref{deteqs-approx-example} at the various powers of $\varepsilon$ provides
\begin{equation}
\label{deteqs-approx}
\begin{aligned}
&E_{u_{(0)}}\left(\Lambda_{(0)}\Delta_{(0)}\right)=0,\\ 
&E_{u_{(0)}}\left(\Lambda_{(0)}\Delta_{(1)}+\left(\Lambda_{(1)}+
\frac{\partial \Lambda_{(0)}}{\partial u_{(0)}}u_{(1)}\right)\Delta_{(0)}\right)=0;
\end{aligned}
\end{equation}
by solving \eqref{deteqs-approx}, we determine the following sets of approximate multipliers 
along with the corresponding approximate fluxes:
\begin{itemize}
\item
\begin{equation}
\label{results-approx-1}
\begin{aligned}
&\Lambda^{1}=1,\qquad
\Phi^t_1=u_{(0)}+\varepsilon u_{(1)},\\
&\Phi^x_1=-u_{(0)}^{-2}u_{(0),x} +\varepsilon\left(2u_{(0)}^{-3}u_{(1)}u_{(0),x}-u_{(0)}^{-2}u_{(1),x}-u_{(0)}+u_{(0)}^{-1}\right);
\end{aligned}
\end{equation}
\item
\begin{equation}
\label{results-approx-2}
\begin{aligned}
\Lambda^{2}&=x + \varepsilon\left(t+ \frac{x^2}{2}\right),\\
\Phi^t_2&=xu_{(0)} + \varepsilon\left(\left(t+ \frac{x^2}{2}\right)u_{(0)}+x u_{(1)}\right),\\
\Phi^x_2&=-xu_{(0)}^{-2}u_{(0),x}- u_{(0)}^{-1} \\
&-\varepsilon\left(\left(\left(t+ \frac{x^2}{2}\right)u_{(0)}^{-2}-2xu_{(0)}^{-3}u_{(1)}\right)u_{(0),x}\right.\\
&\left.+\phantom{\frac{1}{2}}xu_{(0)}^{-2}u_{(1),x}+xu_{(0)}-u_{(0)}^{-2}u_{(1)}\right).
\end{aligned}
\end{equation}
\end{itemize}
The remaining approximate multipliers do not produce new independent approximate conservation laws; in fact, we obtain:
\begin{equation}
\Lambda^{j+2}=\varepsilon\Lambda^{j},\qquad
\Phi^t_{j+2}=\varepsilon\Phi^t_j,\qquad \Phi^x_{j+2}=\varepsilon\Phi^x_j,\qquad j=1,2.
\end{equation}
\begin{remark}
We can compare the results obtained with the different methods. We note that the approximate fluxes \eqref{results-approx-1}--\eqref{results-approx-2}, recovered by means of the consistent direct approach introduced in this paper, are the expansion of \eqref{results-BGI-1}--\eqref{results-BGI-2} obtained by the Approach A; this allows to obtain a consistent expansion of the involved quantities, with the benefit that the amount of algebra is not increased; however, results \eqref{results-FS-1}--\eqref{results-FS-4}, determined by Approach B, are different and require a higher computational cost. 
\end{remark}

\section{Applications}\label{sec:5}
In this Section, we present some examples of physical interest not admitting a Lagrangian formulation, and apply the consistent direct method for approximate conservation laws. By limiting ourselves to a first order approximation, we are able to determine the approximate multipliers and the corresponding approximate conservation laws. All the required computations  have been done by means of the general and freely available package ReLie~\cite{Oliveri-relie} written in the Computer Algebra System Reduce~\cite{Reduce}.

\subsection{The perturbed KdV--Burgers equation}
Let us consider the perturbed KdV--Burgers equation
\begin{equation}
\Delta(t,x,u,u_{,t},u_{,x},u_{,xx},u_{,xxx};\varepsilon)\equiv u_{,t}+uu_{,x}+u_{,xxx}-\varepsilon u_{,xx}=0,
\end{equation}
where $u\equiv u(t,x;\varepsilon)$, and $\varepsilon$ is the small parameter.

For this equation, approximate multipliers and approximate conservation laws have been determined in \cite{Jamal-2019}, following the Approach A, which does not require the expansion of the dependent variable and makes use of the Euler operator \eqref{euler-BGI}.

Therefore, let us expand $u(t,x;\varepsilon)$ at first order in $\varepsilon$, {i.e.},
\begin{equation}
u(t,x;\varepsilon)=u_{(0)}(t,x)+\varepsilon u_{(1)}(t,x)+ O(\varepsilon^2),
\end{equation}
and look for approximate multipliers $\Lambda(t,x,u,u_{,x},u_{,xx};\varepsilon)$, depending on second order derivatives, of the form
\begin{equation}
\begin{aligned}
\Lambda&=\Lambda_{(0)}+\varepsilon\left(\Lambda_{(1)}+
\frac{\partial \Lambda_{(0)}}{\partial u_{(0)}}u_{(1)} +
\frac{\partial \Lambda_{(0)}}{\partial {u}_{(0),x}}{u}_{(1),x}+\frac{\partial \Lambda_{(0)}}{\partial {u}_{(0),xx}}{u}_{(1),xx}\right),
\end{aligned}
\end{equation}
where $\Lambda_{(k)}\equiv\Lambda_{(k)}\left(t,x,u_{(0)},u_{(0),x},u_{(0),xx}\right)$, $k=0,1$.

By using Theorem \ref{teo1}, we obtain the overdetermined system
\begin{equation}
E_{u_{(0)}}\left(\Lambda\Delta\right)=O(\varepsilon^2), 
\end{equation}
where
\begin{equation}
\begin{aligned}
E_{u_{(0)}}&=\frac{\partial }{\partial u_{(0)}}
-D_{t}\left(\frac{\partial 
}{\partial u_{(0),t}}\right)-D_{x}\left(\frac{\partial 
}{\partial u_{(0),x}}\right)\\
&+D_{t}D_{t}\left(\frac{\partial 
}{\partial u_{(0),tt}}\right)+D_{t}D_{x}\left(\frac{\partial 
}{\partial u_{(0),tx}}\right)+D_{x}D_{x}\left(\frac{\partial 
}{\partial u_{(0),xx}}\right)\\
&-D_{t}D_{t}D_{t}\left(\frac{\partial 
}{\partial u_{(0),ttt}}\right)-D_{t}D_{t}D_{x}\left(\frac{\partial 
}{\partial u_{(0),ttx}}\right)\\
&-D_{t}D_{x}D_{x}\left(\frac{\partial 
}{\partial u_{(0),txx}}\right)-D_{x}D_{x}D_{x}\left(\frac{\partial 
}{\partial u_{(0),xxx}}\right),
\end{aligned}
\end{equation}
that, separating terms at the various powers of $\varepsilon$, gives
\begin{equation}
\label{eu1}
\begin{aligned}
&E_{u_{(0)}}\left(\Lambda_{(0)}\Delta_{(0)}\right)=0,\\ 
&E_{u_{(0)}}\left(\Lambda_{(0)}\Delta_{(1)}+\left(\Lambda_{(1)}+
\frac{\partial \Lambda_{(0)}}{\partial u_{(0)}}u_{(1)}\right.\right.\\
&\left.\left.\qquad+\frac{\partial \Lambda_{(0)}}{\partial {u}_{(0),x}}{u}_{(1),x}+\frac{\partial \Lambda_{(0)}}{\partial {u}_{(0),xx}}{u}_{(1),xx}\right)\Delta_{(0)}\right)=0.
\end{aligned}
\end{equation}
By solving \eqref{eu1}, we determine the following sets of approximate multipliers 
along with the corresponding approximate fluxes:
\begin{itemize}
\item
\begin{equation}
\Lambda^{1}=1,
\end{equation}
with
\begin{equation}
\begin{aligned}
\Phi^t_1&=(tu_{(0)}-x)u_{(0),x}+\varepsilon\left(tu_{(1)}u_{(0),x}+(tu_{(0)}-x)u_{(1),x}\right),\\
\Phi^x_1&=u_{(0),xx}-(tu_{(0)}-x)u_{(0),t}\\
&+\varepsilon\left(u_{(1),xx}-tu_{(1)}u_{(0),t}-u_{(0),x}-(tu_{(0)}-x)u_{(1),t}\right);
\end{aligned}
\end{equation}
\item
\begin{equation}
\Lambda^{2}=tu_{(0)}-x+\varepsilon\left(2t^2u_{(0),xx}+(tu_{(0)}-x)^2+t{u}_{(1)}\right),
\end{equation}
with
\begin{equation}
\begin{aligned}
\Phi^t_2&=\frac{(tu_{(0)}-x)^2}{2}u_{(0),x}-\varepsilon\left(t^2u_{(0),x}^2-\frac{(tu_{(0)}-x)^2}{2}u_{(1),x}\right.\\
&-\left.\left(\frac{(tu_{(0)}-x)^3}{3}+t((tu_{(0)}-x)u_{(1)}+1)\right)u_{(0),x}  \right),\\
\Phi^x_2&=(t u_{(0)}-x) u_{(0),xx}-\frac{t}{2}u_{(0),x}^2-\frac{(t u_{(0)}-x)^2}{2} u_{(0),t}+u_{(0),x}\\
&+\varepsilon\left(t^2u_{(0),xx}^2+((tu_{(0)}-x)^2+t u_{(1)}) u_{(0),xx}\phantom{\frac{1}{2}}\right.\\
&+\left.(t u_{(0)}-x) u_{(1),xx}+t(2tu_{(0),t}+xu_{(0),x}-u_{(1),x})u_{(0),x}\phantom{\frac{1}{2}}\right.\\
&-\left(\frac{(tu_{(0)}-x)^3}{3}+t((tu_{(0)}-x)u_{(1)}+1)\right)u_{(0),t}\\
&+\left.(t u_{(0)}-x)u_{(0),x}-\frac{(tu_{(0)}-x)^2}{2} u_{(1),t}+u_{(1),x}\right);
\end{aligned}
\end{equation}
\item
\begin{equation}
\Lambda^{3}=u_{(0)}+\varepsilon\left(4tu_{(0),xx}+2(tu_{(0)}-x)u_{(0)}+u_{(1)}\right),
\end{equation}
with
\begin{equation}
\begin{aligned}
\Phi^t_3&=(tu_{(0)}-x)u_{(0)}u_{(0),x}-\varepsilon\left(2tu_{(0),x}^2-(tu_{(0)}-x)u_{(0)}u_{(1),x}\right.\\
&-\left.((tu_{(0)}-x)^2u_{(0)}+(2tu_{(0)}-x)u_{(1)})u_{(0),x}\right),\\
\Phi^x_3&=u_{(0)}u_{(0),xx}-\frac{u_{(0),x}^2}{2}-(tu_{(0)}-x)u_{(0)}u_{(0),t}\\
&+\varepsilon\left(2tu_{(0),xx}^2+(2(t u_{(0)}-x)u_{(0)}+u_{(1)})u_{(0),xx}+u_{(0)}u_{(1),xx}\right.\\
&+(4tu_{(0),t}+xu_{(0),x}-u_{(1),x})u_{(0),x}-(tu_{(0)}-x) u_{(0)} u_{(1),t}\\
&-\left.((tu_{(0)}-x)^2u_{(0)}+(2tu_{(0)}-x)u_{(1)}) u_{(0),t}+u_{(0)}u_{(0),x}\right);
\end{aligned}
\end{equation}
\item
\begin{equation}
\Lambda^{4}=\varepsilon\left(u_{(0),xx}+\frac{u_{(0)}^2}{2}\right),
\end{equation}
with
\begin{equation}
\begin{aligned}
\Phi^t_4&=\varepsilon\left(-u_{(0),x}+(tu_{(0)}-x)u_{(0)}^2\right)u_{(0),x},\\
\Phi^x_4&=\varepsilon\left((u_{(0),xx}+u_{(0)}^2)u_{(0),xx}+(2u_{(0),x}-(tu_{(0)}-x)u_{(0)}^2)u_{(0),t}\right).
\end{aligned}
\end{equation}
\end{itemize}
The remaining approximate multipliers do not produce new independent approximate conservation laws; in fact, we obtain:
\begin{equation}
\Lambda^{j+4}=\varepsilon\Lambda^{j},
\end{equation}
with
\begin{equation}
\Phi^t_{j+4}=\varepsilon\Phi^t_j,\qquad \Phi^x_{j+4}=\varepsilon\Phi^x_j,\qquad j=1,2,3.
\end{equation}
\begin{remark}
The approximate multipliers above recovered are just the expansion of those
determined in \cite{Jamal-2019} by means of Approach A; however,  the associated approximate fluxes obtained in this paper are different from those reported in \cite{Jamal-2019}. 
\end{remark}

\subsection{A nonlinear wave equation}
Let us consider the perturbed nonlinear wave equation
\begin{equation}
	\Delta(t,x,u,u_{,tt},u_{,xx};\varepsilon)\equiv u_{,xx}-\frac{1}{c^2}u_{,tt}-\lambda u^3-\varepsilon f(u)=0,
\end{equation}
where $u\equiv u(t,x;\varepsilon)$, and $f(u)$ is an arbitrary function of its argument.
In \cite{Jamal-2019}, approximate multipliers have been determined following the Approach B, \emph{i.e.}, by requiring the expansion of the dependent variable and using the Euler operators defined in \eqref{euler-FS}.

Following the approach developed in this paper, let us expand the dependent variable $u(t,x;\varepsilon)$ at ﬁrst order in $\varepsilon$, and look for approximate multipliers $\Lambda(t,x,u,u_{,t},u_{,x};\varepsilon)$, depending on first order derivatives, of the form
\begin{equation}
\Lambda=\Lambda_{(0)}+\varepsilon\left(\Lambda_{(1)}+
\frac{\partial \Lambda_{(0)}}{\partial u_{(0)}}u_{(1)} +
\frac{\partial \Lambda_{(0)}}{\partial {u}_{(0),t}}{u}_{(1),t}+\frac{\partial \Lambda_{(0)}}{\partial {u}_{(0),x}}{u}_{(1),x}\right),
\end{equation}
with the positions
\begin{equation}
\Lambda_{(k)}=g^0_{(k)}+g^1_{(k)}u_{(0),t}+g^2_{(k)}u_{(0),x},\qquad k=0,1
\end{equation}
where $g^0_{(k)}\equiv g^0_{(k)}(t,x,u_{(0)})$ ($k=0,1$) are functions of their arguments.

By applying the approximate direct procedure, we split the system
\begin{equation}
E_{u_{(0)}}\left(\Lambda \Delta\right)=O(\varepsilon^2)
\end{equation}
at the various powers of $\varepsilon$, and we are able to recover the following sets of approximate multipliers and the associated approximate fluxes:
\begin{itemize}
\item
\begin{equation}
\Lambda^{1}=u_{(0),t}+\varepsilon{u}_{(1),t},
\end{equation}
with
\begin{equation}
\begin{aligned}
\Phi^t_1&=-\frac{1}{2c^2}u_{(0),t}^2-\frac{u_{(0),x}^2}{2}+\lambda x u_{(0)}^3u_{(0),x}\\
&-\varepsilon\left(\frac{x}{c^2}(u_{(0),t}u_{(1),t}+c^2u_{(0),x}u_{(1),x})+t(u_{(0),t}u_{(1),x}+u_{(0),x}u_{(1),t})\right.\\
&+\frac{1}{2}\left.(c^2t^2-x^2)\phantom{\frac{}{}}((3\lambda u_{(0)}^2u_{(1)}+f(u_{(0)}) )u_{(0),x}+\lambda u_{(0)}^3u_{(1),x})\right),\\
\Phi^x_1&=(u_{(0),x}-\lambda x u_{(0)}^3)u_{(0),t}\\
&+\varepsilon\left(u_{(0),t}u_{(1),x}+u_{(0),x}u_{(1),t}\phantom{\frac{}{}}\right.\\
&-\left.\phantom{\frac{}{}}(3\lambda u_{(0)}^2u_{(1)}+f(u_{(0)}))x u_{(0),t}-\lambda x u_{(0)}^3u_{(1),t}\right);
\end{aligned}
\end{equation}
\item
\begin{equation}
\Lambda^{2}=u_{(0),x}+\varepsilon{u}_{(1),x},
\end{equation}
with
\begin{equation}
\begin{aligned}
\Phi^t_2&=-\frac{1}{c^2}(u_{(0),t}+\lambda c^2 tu_{(0)}^3)u_{(0),x}\\
&-\varepsilon\left(\frac{1}{c^2}(u_{(0),t}u_{(1),x}+u_{(0),x}u_{(1),t})\right.\\
&+\left.(3\lambda u_{(0)}^2u_{(1)} + f(u_{(0)}))t u_{(0),x} +\lambda t u_{(0)}^3u_{(1),x}\phantom{\frac{}{}}\right),\\
\Phi^x_2&=\frac{1}{2c^2}u_{(0),t}^2+\frac{u_{(0),x}^2}{2}+\lambda t u_{(0)}^3u_{(0),t}\\
&+\varepsilon\left(\frac{1}{c^2}(u_{(0),t}u_{(1),t}-c^2 u_{(0),x}u_{(1),x})\right.\\
&+\left.\left(3\lambda u_{(0)}^2u_{(1)}+f(u_{(0)}) \right)t u_{(0),t}+\lambda t u_{(0)}^3u_{(1),t}\phantom{\frac{}{}}\right);
\end{aligned}
\end{equation}
\item
\begin{equation}
\Lambda^{3}=xu_{(0),t}+c^2 t u_{(0),x}+\varepsilon(xu_{(1),t}+c^2 t u_{(1),x}),
\end{equation}
with
\begin{align}
\Phi^t_3&=-\frac{x}{2c^2}u_{(0),t}^2-tu_{(0),t}u_{(0),x}-\frac{x}{2}u_{(0),x}^2 -\frac{\lambda(c^2t^2-x^2)}{2}u_{(0)}^3u_{(0),x}\nonumber\allowdisplaybreaks\\
&-\varepsilon\left(\frac{x}{c^2}(u_{(0),t}u_{(1),t}+c^2u_{(0),x}u_{(1),x})+t(u_{(0),t}u_{(1),x}+u_{(0),x}u_{(1),t})\right.\nonumber\allowdisplaybreaks\\
&+\frac{1}{2}\left.(c^2t^2-x^2)\phantom{\frac{}{}}((3\lambda u_{(0)}^2u_{(1)}+f(u_{(0)}) )u_{(0),x}+\lambda u_{(0)}^3u_{(1),x})\right),\\
\Phi^x_3&=\frac{t}{2}u_{(0),t}^2+xu_{(0),t}u_{(0),x}+\frac{c^2}{2}tu_{(0),x}^2 +\frac{\lambda(c^2t^2-x^2)}{2}u_{(0)}^3u_{(0),t}\nonumber\allowdisplaybreaks\\
&+\varepsilon\left(t(u_{(0),t}u_{(1),t}+c^2u_{(0),x}u_{(1),x})+x(u_{(0),t}u_{(1),x}+u_{(0),x}u_{(1),t})\phantom{\frac{}{}}\phantom{\frac{c^2}{c^2}}\right.\nonumber\allowdisplaybreaks\\
&+\frac{1}{2}\left.(c^2t^2-x^2)\phantom{\frac{}{}}((3\lambda u_{(0)}^2u_{(1)}+f(u_{(0)}) )u_{(0),t}+\lambda u_{(0)}^3u_{(1),t})\right).\nonumber\allowdisplaybreaks
\end{align}
\end{itemize}
The remaining approximate multipliers do not produce new independent approximate conservation laws; in fact, we obtain:
\begin{equation}
\Lambda^{j+3}=\varepsilon\Lambda^{j},
\end{equation}
with
\begin{equation}
\begin{aligned}
\Phi^t_{j+3}&=\varepsilon\Phi^t_j,\qquad \Phi^x_{j+3}=\varepsilon\Phi^x_j,\qquad j=1,2,3.
\end{aligned}
\end{equation}
\begin{remark}
In this case, the recovered approximate multipliers are different
from the ones reported in \cite{Jamal-2019}, and this is a consequence of the different approach.
\end{remark}
\subsection{Perturbed nonlinear second order Schr\"{o}dinger equation}
Let us consider the perturbed nonlinear second order Schr\"{o}dinger equation
\begin{equation}\label{sch1}
\ii p_{,t}+p_{,xx}+2|p|^2p-\varepsilon |p|^4 p=0,
\end{equation}
where $p\equiv p(t,x;\varepsilon)$ is the  complex--valued envelope of the wave.
By decomposing Eq. \eqref{sch1} into real and imaginary parts, we obtain the following system of partial differential equations:
\begin{equation}
\begin{aligned}
\Delta^1&\equiv u_{,t}+v_{,xx}+2v(u^2+v^2)-\varepsilon v\left(u^2+v^2\right)^2=0,\\
\Delta^2&\equiv v_{,t}-u_{,xx}-2u(u^2+v^2)+\varepsilon u\left(u^2+v^2\right)^2=0.
\end{aligned}
\end{equation}

Let us expand $u(t,x;\varepsilon)$ and $v(t,x;\varepsilon)$ at first order in $\varepsilon$, {i.e.},
\begin{equation}
\begin{aligned}
&u(t,x;\varepsilon)=u_{(0)}(t,x)+\varepsilon u_{(1)}(t,x)+ O(\varepsilon^2),\\
&v(t,x;\varepsilon)=v_{(0)}(t,x)+\varepsilon v_{(1)}(t,x)+ O(\varepsilon^2),
\end{aligned}
\end{equation}
and look for the approximate multipliers $\Lambda^\nu(t,x,u,v,u_{,x},v_{,x},u_{,xx},v_{,xx};\varepsilon)$ ($\nu=1,2$) depending on second order derivatives, say
\begin{equation}
\label{second-order-mult}
\begin{aligned}
\Lambda^\nu&=\Lambda^\nu_{(0)}+\varepsilon\left(\Lambda^\nu_{(1)}+
\frac{\partial \Lambda^\nu_{(0)}}{\partial u_{(0)}}u_{(1)} +
\frac{\partial \Lambda^\nu_{(0)}}{\partial v_{(0)}}v_{(1)}+\frac{\partial \Lambda^\nu_{(0)}}{\partial {u}_{(0),x}}{u}_{(1),x}\right.\\
&+\left.\frac{\partial \Lambda^\nu_{(0)}}{\partial {v}_{(0),x}}{v}_{(1),x}+\frac{\partial \Lambda^\nu_{(0)}}{\partial {u}_{(0),xx}}{u}_{(1),xx}+\frac{\partial \Lambda^\nu_{(0)}}{\partial {v}_{(0),xx}}{v}_{(1),xx}\right),
\end{aligned}
\end{equation}
where $\Lambda^\nu_{(k)}\equiv\Lambda^\nu_{(k)}\left(t,x,u_{(0)},v_{(0)},u_{(0),x},v_{(0),x},u_{(0),xx},v_{(0),xx}\right)$, $\;k=0,1$.

From Theorem \ref{teo1}, we have
\begin{equation}
\begin{aligned}
&E_{u_{(0)}}\left(\Lambda^1\Delta^1+\Lambda^2\Delta^2\right)=0,\\
&E_{v_{(0)}}\left(\Lambda^1\Delta^1+\Lambda^2\Delta^2\right)=0,
\end{aligned}
\end{equation}
and the following sets of approximate multipliers with the associated approximate fluxes are recovered:
\begin{itemize}
\item
\begin{equation}
\begin{aligned}
\Lambda^1_1&=v_{(0),xx}+2v_{(0)}(u_{(0)}^2+v_{(0)}^2)+\varepsilon\left(v_{(1),xx}-v_{(0)}(u_{(0)}^2+v_{(0)}^2)^2\right.\\
&+\left.2(u_{(0)}^2v_{(1)}+2u_{(0)}u_{(1)}v_{(0)}+3v^2_{(0)}v_{(1)})\right),\\
\Lambda^2_1&=-u_{(0),xx}-2u_{(0)}(u_{(0)}^2+v_{(0)}^2)-\varepsilon\left(u_{(1),xx}-u_{(0)}(u_{(0)}^2+v_{(0)}^2)^2\right.\\
&+\left.2(v^2_{(0)}u_{(1)}+2u_{(0)}v_{(0)}v_{(1)}+3u_{(0)}^2u_{(1)})\right),
\end{aligned}
\end{equation}
with
\begin{equation}
\begin{aligned}
\Phi^t_1&=\frac{1}{2}\left(u_{(0),x}^2+v_{(0),x}^2-(u_{(0)}^2+v_{(0)}^2)^2\right)\\
&+\varepsilon\left(\left(u_{(1),x}-x u_{(0)}(u_{(0)}^2+v_{(0)}^2)^2\right)u_{(0),x}\right.\\
&+\left(v_{(1),x}-x v_{(0)}(u_{(0)}^2+v_{(0)}^2)^2\right)v_{(0),x}\\
&-\left.2(u_{(0)}^2+v_{(0)}^2)(u_{(0)}u_{(1)}+v_{(0)}v_{(1)})\right),\\
\Phi^x_1&=-(u_{(0),t}u_{(0),x}+v_{(0),t}v_{(0),x})\\
&-\varepsilon\left(\left(u_{(1),x}-x u_{(0)}(u_{(0)}^2+v_{(0)}^2)^2\right)u_{(0),t}\right.\\
&+\left.\left(v_{(1),x}-x v_{(0)}(u_{(0)}^2+v_{(0)}^2)^2\right)v_{(0),t}+u_{(0),x}u_{(1),t}+v_{(0),x}v_{(1),t}\right);
\end{aligned}
\end{equation}
\item
\begin{equation}
\begin{aligned}
\Lambda^1_2&=2tu_{(0),x}+x v_{(0)}+\varepsilon\left(2tu_{(1),x}+x v_{(1)}\right),\\
\Lambda^2_2&=2tv_{(0),x}-x u_{(0)}+\varepsilon\left(2tv_{(1),x}-x u_{(1)}\right),
\end{aligned}
\end{equation}
with
\begin{equation}
\begin{aligned}
\Phi^t_2&=2tv_{(0)}u_{(0),x}+\frac{x}{2}(u_{(0)}^2+v_{(0)}^2)\\
&+\varepsilon\left(t(tu_{(0)}(u_{(0)}^2+v_{(0)}^2)^2+2v_{(1)})u_{(0),x}+x(u_{(0)}u_{(1)}+v_{(0)}v_{(1)})\right.\\
&+\left.t^2v_{(0)}(u_{(0)}^2+v_{(0)}^2)^2v_{(0),x}+2tv_{(0)}u_{(1),x}\right),\\
\Phi^x_2&=-t\left(u_{(0),x}^2+v_{(0),x}^2+2v_{(0)}u_{(0),t}+(u_{(0)}^2+v_{(0)}^2)^2\right)\\
&-x(v_{(0)}u_{(0),x}-u_{(0)}v_{(0),x})-u_{(0)}v_{(0)}\\
&-\varepsilon\left(t(tu_{(0)}(u_{(0)}^2+v_{(0)}^2)^2+2v_{(1)})u_{(0),t}+2tv_{(0)}u_{(1),t}\right.\\
&+\left.(2tu_{(1),x}+xv_{(1)})u_{(0),x}+(2tv_{(1),x}-xu_{(1)})v_{(0),x}\phantom{\frac{}{}}\right.\\
&+t^2v_{(0)}(u_{(0)}^2+v_{(0)}^2)^2v_{(0),t}+x(v_{(0)}u_{(1),x}-u_{(0)}v_{(1),x})\\
&+\left.4t(u_{(0)}^2+v_{(0)}^2)(u_{(0)}u_{(1)}+v_{(0)}v_{(1)})+u_{(0)}v_{(1)}+v_{(0)}u_{(1)}\right);
\end{aligned}
\end{equation}
\item
\begin{equation}
\begin{aligned}
\Lambda^1_3&=v_{(0)}+\varepsilon v_{(1)},\qquad
\Lambda^2_3=-u_{(0)}-\varepsilon u_{(1)},
\end{aligned}
\end{equation}
with
\begin{equation}
\begin{aligned}
\Phi^t_3&=\frac{1}{2}(u_{(0)}^2+v_{(0)}^2)+\varepsilon(u_{(0)}u_{(1)}+v_{(0)}v_{(1)}),\\
\Phi^x_3&=u_{(0)}v_{(0),x}-v_{(0)}u_{(0),x}\\
&+\varepsilon\left(u_{(1)}v_{(0),x}-v_{(1)}u_{(0),x}+u_{(0)}v_{(1),x}-v_{(0)}u_{(1),x}\right);
\end{aligned}
\end{equation}
\item
\begin{equation}
\begin{aligned}
\Lambda^1_4&=u_{(0),x}+\varepsilon u_{(1),x},\qquad
\Lambda^2_4=v_{(0),x}+\varepsilon v_{(1),x},
\end{aligned}
\end{equation}
with
\begin{equation}
\begin{aligned}
\Phi^t_4&=v_{(0)}u_{(0),x}+\varepsilon\left((tu_{(0)}(u_{(0)}^2+v_{(0)}^2)^2+v_{(1)})u_{(0),x}\right.\\
&+\left.tv_{(0)}(u_{(0)}^2+v_{(0)}^2)^2v_{(0),x}+v_{(0)}u_{(1),x}\right),\\
\Phi^x_4&=-\frac{1}{2}\left(u_{(0),x}^2+v_{(0),x}^2+(u_{(0)}^2+v_{(0)}^2)^2\right)-v_{(0)}u_{(0),t}\\
&-\varepsilon\left(u_{(0),x}u_{(1),x}+v_{(0),x}v_{(1),x}+(tu_{(0)}(u_{(0)}^2+v_{(0)}^2)^2+v_{(1)})u_{(0),t}\right.\\
&+tv_{(0)}(u_{(0)}^2+v_{(0)}^2)^2v_{(0),t}+v_{(0)}u_{(1),t}\\
&+\left.2(u_{(0)}^2+v_{(0)}^2)(u_{(0)}u_{(1)}+v_{(0)}v_{(1)})\right).
\end{aligned}
\end{equation}
\end{itemize}
The remaining approximate multipliers do not produce new independent approximate conservation laws; in fact, we obtain:
\begin{equation}
\Lambda^i_{j+4}=\varepsilon \Lambda^i_j,
\end{equation}
with
\begin{equation}
\Phi^t_{j+4}=\varepsilon\Phi^t_j,\qquad \Phi^x_{j+4}=\varepsilon\Phi^x_j,
\end{equation}
where $i=1,2$ and $j=1,\dots,4$.
\subsection{Perturbed nonlinear third order Schr\"{o}dinger equation}
Let us consider the  perturbed nonlinear third order Schr\"{o}dinger equation
\begin{equation}\label{sch2}
\ii p_{,t}+\frac{1}{2}p_{,xx}+|p|^2p+\ii\varepsilon\left(\beta_1 p_{,xxx}+\beta_2 |p|^2p_{,x}+\beta_3 p(|p|^2)_{,x}\right)=0,
\end{equation}
where $p\equiv p(t,x;\varepsilon)$ is the  complex--valued envelope of the wave, $\beta_1$, $\beta_2$ and $\beta_3$ are real parameters.
Decompose Eq. \eqref{sch2} into real and imaginary parts, \emph{i.e.},
\begin{equation}
\begin{aligned}
\Delta^1&\equiv u_{,t}+\frac{1}{2}v_{,xx}+v(u^2+v^2)\\
&+\varepsilon\left(\beta_1u_{,xxx}+\beta_2(u^2+v^2)u_{,x}+2\beta_3 u(uu_{,x}+vv_{,x})\right)=0,\\
\Delta^2&\equiv v_{,t}-\frac{1}{2}u_{,xx}-u(u^2+v^2)\\
&+\varepsilon\left(\beta_1v_{,xxx}+\beta_2(u^2+v^2)v_{,x}+2\beta_3 v(uu_{,x}+vv_{,x})\right)=0,
\end{aligned}
\end{equation}
and expand $u(t,x;\varepsilon)$ and $v(t,x;\varepsilon)$ at ﬁrst order in $\varepsilon$. By applying the approximate direct procedure,
we determine approximate multipliers with the form \eqref{second-order-mult} along with the associated approximate fluxes:
\begin{itemize}
\item
\begin{equation}
\begin{aligned}
\Lambda^1_1&=v_{(0),xx}+2v_{(0)}(u_{(0)}^2+v_{(0)}^2)\\
&+\varepsilon\left(v_{(1),xx}+2(\beta_2+2\beta_3-6\beta_1)(u_{(0)}^2+v_{(0)}^2)u_{(0),x}\right.\\
&+\left.2(u_{(0)}^2v_{(1)}+2u_{(0)}u_{(1)}v_{(0)}+3v^2_{(0)}v_{(1)})\right),\\
\Lambda^2_1&=-u_{(0),xx}-2u_{(0)}(u_{(0)}^2+v_{(0)}^2)\\
&-\varepsilon\left(u_{(1),xx}-2(\beta_2+2\beta_3-6\beta_1)(u_{(0)}^2+v_{(0)}^2)v_{(0),x}\right.\\
&+\left.2(v^2_{(0)}u_{(1)}+2u_{(0)}v_{(0)}v_{(1)}+3u_{(0)}^2u_{(1)})\right),
\end{aligned}
\end{equation}
with
\begin{align}
\Phi^t_1&=\frac{1}{2}\left(u_{(0),x}^2+v_{(0),x}^2\right)+2xu_{(0)}^3u_{(0),x}-\frac{v_{(0)}^2}{2}(2u_{(0)}^2+v_{(0)}^2)\nonumber\allowdisplaybreaks\\
&+\varepsilon\left(\left(u_{(1),x}+6xu_{(0)}^2u_{(1)}\right)u_{(0),x}\phantom{\frac{1}{c^2}}\right.\nonumber\allowdisplaybreaks\\
&+\left.\left(v_{(1),x}+\frac{2}{3}(\beta_2+2\beta_3-6\beta_1)(u_{(0)}^2+3v_{(0)}^2)u_{(0)}\right)v_{(0),x}\right.\nonumber\allowdisplaybreaks\\
&\left.+\phantom{\frac{1}{c}}2xu_{(0)}^3u_{(1),x}-2v_{(0)}(u_{(0)}^2v_{(1)}+u_{(0)}v_{(0)}u_{(1)}+v_{(0)}^2v_{(1)})\right),\nonumber\allowdisplaybreaks\\
\Phi^x_1&=-(u_{(0),t}u_{(0),x}+v_{(0),t}v_{(0),x}+2xu_{(0)}^3u_{(0),t})\\
&-\varepsilon\left(\frac{\beta_1}{2}(u_{(0),xx}^2+v_{(0),xx}^2)+2\beta_1(u_{(0)}^2+v_{(0)}^2)(u_{(0)}u_{(0),xx}+v_{(0)}v_{(0),xx})\right.\nonumber\allowdisplaybreaks\\
&+(u_{(0),x}+2xu_{(0)}^3)u_{(1),t}+v_{(0),x}v_{(1),t}+\left(u_{(1),x}+6xu_{(0)}^2u_{(1)}\right)u_{(0),t}\nonumber\allowdisplaybreaks\\
&+\left(v_{(1),x}+\frac{2}{3}(\beta_2+2\beta_3-6\beta_1)(u_{(0)}^2+3v_{(0)}^2)u_{(0)}\right)v_{(0),t}\nonumber\allowdisplaybreaks\\
&\left.+\phantom{\frac{1}{c}}(2\beta_1-\beta_3)\left(v_{(0)}u_{(0),x}-u_{(0)}v_{(0),x}\right)^2+2\beta_1(u_{(0)}^2+v_{(0)}^2)^3\right)\nonumber\allowdisplaybreaks;
\end{align}
\item
\begin{equation}
\begin{aligned}
\Lambda^1_2&=tu_{(0),x}+xv_{(0)}\\
&+\varepsilon\left((3\beta_1-\beta_2)tv_{(0),xx}+(\beta_2-6\beta_1)xu_{(0),x}+tu_{(1),x}+xv_{(1)}\phantom{\frac{}{}}\right.\\
&+\left.2(3\beta_1-\beta_2-\beta_3)(u_{(0)}^2+v_{(0)}^2)tv_{(0)}+\frac{u_{(0)}}{2}(\beta_2-6\beta_1)\right),\\
\Lambda^2_2&=tv_{(0),x}-xu_{(0)}\\
&-\varepsilon\left((3\beta_1-\beta_2)tu_{(0),xx}-(\beta_2-6\beta_1)xv_{(0),x}-tv_{(1),x}+xu_{(1)}\phantom{\frac{}{}}\right.\\
&\left.+2(3\beta_1-\beta_2-\beta_3)(u_{(0)}^2+v_{(0)}^2)tu_{(0)}-\frac{v_{(0)}}{2}(\beta_2-6\beta_1)\right),
\end{aligned}
\end{equation}
with
\begin{align}
\Phi^t_2&=\left(\frac{x^2}{2}u_{(0)}-tv_{(0)}\right)u_{(0),x}-\frac{x}{2}v_{(0)}^2\nonumber\allowdisplaybreaks\\
&+\varepsilon\left(\frac{3\beta_1-\beta_2}{2}t(u_{(0),x}^2+v_{(0),x}^2)+\left(\frac{x^2}{2}u_{(1)}-tv_{(1)}\right)u_{(0),x}\right.\nonumber\allowdisplaybreaks\\
&+\left(2(3\beta_1-\beta_2-\beta_3)tv_{(0)}^3-(6\beta_1-\beta_2)u_{(0)}\right)xv_{(0),x}\nonumber\allowdisplaybreaks\\
&+\left(\frac{x^2}{2}u_{(0)}-tv_{(0)}\right)u_{(1),x}-xv_{(0)}v_{(1)}\nonumber\allowdisplaybreaks\\
&-\left.\frac{3\beta1-\beta_2-\beta_3}{2}t(u_{(0)}^2+2v_{(0)}^2)u_{(0)}^2+\frac{\beta_2-6\beta_1}{2}u_{(0)}v_{(0)}\right),\nonumber\allowdisplaybreaks\\
\Phi^x_2&=-\left(\frac{x^2}{2}u_{(0)}-tv_{(0)}\right)u_{(0),t}+\frac{t}{4}(u_{(0),x}^2+v_{(0),x}^2)\nonumber\\
&+\frac{x}{2}(v_{(0)}u_{(0),x}-u_{(0)}v_{(0),x})+\frac{t}{4}(u_{(0)}^2+v_{(0)}^2)^2+\frac{u_{(0)}v_{(0)}}{2}\allowdisplaybreaks\\
&-\varepsilon\left(\beta_1((tu_{(0),x}+xv_{(0)})v_{(0),xx}-(tv_{(0),x}-xu_{(0)})u_{(0),xx})\phantom{\frac{1}{2}}\right.\nonumber\allowdisplaybreaks\\
&+t(3\beta_1-\beta_2)(u_{(0),t}u_{(0),x}+v_{(0),t}v_{(0),x})-\frac{x}{4}(\beta_2-4\beta_1)(u_{(0),x}^2+v_{(0),x}^2)    \nonumber\allowdisplaybreaks\\
&+\left(\frac{x^2}{2}u_{(1)}-tv_{(1)}\right)u_{(0),t}+(2(3\beta_1-\beta_2-\beta_3)tv_{(0)}^3+(\beta_2-6\beta_1)u_{(0)})xv_{(0),t}
\nonumber\allowdisplaybreaks\\
&-\left(\frac{t}{2}(u_{(1),x}-2\beta_3(u_{(0)}^2+v_{(0)}^2)v_{(0)})+\frac{1}{4}((\beta_2-2\beta_1)u_{(0)}+2xv_{(1)})\right)u_{(0),x}\nonumber\allowdisplaybreaks\\
&-\left(\frac{t}{2}(v_{(1),x}+2\beta_3(u_{(0)}^2+v_{(0)}^2)u_{(0)})+\frac{1}{4}((\beta_2-2\beta_1)v_{(0)}-2xu_{(1)})\right)v_{(0),x}\nonumber\allowdisplaybreaks\\
&+\left(\frac{x^2}{2}u_{(0)}-tv_{(0)}\right)u_{(1),t}-\frac{x}{2}(v_{(0)}u_{(1),x}-u_{(0)}v_{(1),x})\nonumber\allowdisplaybreaks\\
&-t(u_{(0)}^2+v_{(0)}^2)(u_{(0)}u_{(1)}+v_{(0)}v_{(1)})-\frac{1}{2}(u_{(0)}v_{(1)}+v_{(0)}u_{(1)})\nonumber\\
&+\left.\frac{x}{2}\left(3\beta_1((u_{(0)}^2+v_{(0)}^2)^2+v_{(0)}^4)-\beta_2 v_{(0)}^4+\beta_3 u_{(0)}^2(u_{(0)}^2+2v_{(0)}^2)\right)\right);\nonumber
\end{align}
\item
\begin{equation}
\begin{aligned}
\Lambda^1_3&=v_{(0)}+\varepsilon\left(v_{(1)}-3\beta_1u_{(0),x}\right),\\
\Lambda^2_3&=-u_{(0)}-\varepsilon\left(u_{(1)}+3\beta_1v_{(0),x}\right),
\end{aligned}
\end{equation}
with
\begin{equation}
\begin{aligned}
\Phi^t_3&=xu_{(0)}u_{(0),x}-\frac{v_{(0)}^2}2\\
&+\varepsilon\left(u_{(0),x}u_{(1),x}-3\beta_1u_{(0)}v_{(0),x}+xu_{(0)}u_{(1),x}-v_{(0)}v_{(1)}\right),\\
\Phi^x_3&=-\frac{u_{(0)}}{2}(2xu_{(0),t}+v_{(0),x})+\frac{v_{(0)}}{2}u_{(0),x}\\
&-\varepsilon\left(\beta_1(u_{(0)}u_{(0),xx}+v_{(0)}v_{(0),xx})+xu_{(1)}u_{(0),t}-3\beta_1u_{(0)}v_{(0),t}\phantom{\frac{1}{2}}\right.\\
&+\frac{\beta_1}{4}(u_{(0),x}^2+v_{(0),x}^2)-\frac{1}{2}(v_{(1)}u_{(0),x}-u_{(1)}v_{(0),x})+xu_{(0)}u_{(1),t}\\
&-\left.\frac{1}{2}(v_{(0)}u_{(1),x}-u_{(0)}v_{(1),x})+\frac{3\beta_1+\beta_2+2\beta_3}{4}(u_{(0)}^2+v_{(0)}^2)^2\right);
\end{aligned}
\end{equation}
\item
\begin{equation}
\begin{aligned}
\Lambda^1_4&=u_{(0),x}+\varepsilon(u_{(1),x}-2\beta_3v_{(0)}(u_{(0)}^2+v_{(0)}^2)),\\
\Lambda^2_4&=v_{(0),x}+\varepsilon(v_{(1),x}+2\beta_3u_{(0)}(u_{(0)}^2+v_{(0)}^2)),
\end{aligned}
\end{equation}
with
\begin{align}
\Phi^t_4&=v_{(0)}u_{(0),x}\nonumber\allowdisplaybreaks\\
&+\varepsilon\left(v_{(1)}u_{(0),x}+2\beta_3xv_{(0)}^3v_{(0),x}+v_{(0)}u_{(1),x}-\frac{\beta_3}{2}u_{(0)}^2(u_{(0)}^2+2v_{(0)}^2)\right),\nonumber\allowdisplaybreaks\\
\Phi^x_4&=-v_{(0)}u_{(0),t}-\frac{1}{4}(u_{(0),x}^2+v_{(0),x}^2+(u_{(0)}^2+v_{(0)}^2)^2)\allowdisplaybreaks\\
&-\varepsilon\left(\beta_1(v_{(0),x}u_{(0),xx}-u_{(0),x}v_{(0),xx})+v_{(1)}u_{(0),t}+2\beta_3xv_{(0)}^3v_{(0),t}\phantom{\frac{}{}}\right.\nonumber\allowdisplaybreaks\\
&+\frac{1}{2}(u_{(1),x}-2\beta_3v_{(0)}(u_{(0)}^2+v_{(0)}^2))u_{(0),x}\nonumber\allowdisplaybreaks\\
&+\frac{1}{2}(v_{(1),x}+2\beta_3u_{(0)}(u_{(0)}^2+v_{(0)}^2))v_{(0),x}\nonumber\allowdisplaybreaks\\
&\left.+\phantom{\frac{}{}}v_{(0)}u_{(1),t}+(u_{(0)}^2+v_{(0)}^2)(u_{(0)}u_{(1)}+v_{(0)}v_{(1)})\right).\nonumber
\end{align}
\end{itemize}
The remaining approximate multipliers do not produce new independent approximate conservation laws; in fact, we obtain:
\begin{equation}
\Lambda^i_{j+4}=\varepsilon \Lambda^i_j,
\end{equation}
with
\begin{equation}
\Phi^t_{j+4}=\varepsilon\Phi^t_j,\qquad \Phi^x_{j+4}=\varepsilon\Phi^x_j,
\end{equation}
where $i=1,2$ and $j=1,\dots,4$.
\begin{remark}
We also checked the existence of approximate multipliers depending on third order derivatives, but they turn out to be trivial, and we omit to report them.
\end{remark}

\subsection{Generalized Kaup--Newell equation}
Let us consider the generalized Kaup--Newell equation
\begin{equation}\label{kaup}
\begin{aligned}
\Delta^1&\equiv u_{,t}-\frac{1}{2}u_{,xx}+uvu_{,x}+\frac{1}{2}u^2v_{,x}+2\varepsilon uu_{,x}=0,\\
\Delta^2&\equiv v_{,t}+\frac{1}{2}v_{,xx}+uvv_{,x}+\frac{1}{2}v^2u_{,x}+2\varepsilon(vu_{,x}+uv_{,x})=0,
\end{aligned}
\end{equation}
where $u\equiv u(t,x;\varepsilon)$ and $v\equiv v(t,x;\varepsilon)$. Conservation laws for this system have been determined in \cite{Young2016} where a Lax pair has been used, and in \cite{Abdullahi2018} by means of a multiplier approach. 

By expanding the dependent variables at ﬁrst order in $\varepsilon$, and applying the approximate direct method,
we determine approximate multipliers with the form \eqref{second-order-mult} and the associated approximate fluxes:
\begin{itemize}
\item
\begin{align}
\Lambda^1_1&=2tv_{(0),xx}+2(3tu_{(0)}v_{(0)}-x)v_{(0),x}+(3tu_{(0)}^2v_{(0)}^2-2xu_{(0)}v_{(0)}-2)v_{(0)}\nonumber\allowdisplaybreaks\\
&+\varepsilon\left(2tv_{(1),xx}+6t(u_{(0)}v_{(1)}+v_{(0)}u_{(1)}+2u_{(0)})v_{(0),x}\phantom{\frac{}{}}\right.\nonumber\allowdisplaybreaks\\
&+2(3tu_{(0)}v_{(0)}-x)v_{(1),x}+3tu_{(0)}v_{(0)}^2(3u_{(0)}v_{(1)}+2v_{(0)}u_{(1)}+6u_{(0)})\nonumber\allowdisplaybreaks\\
&\left.-\phantom{\frac{}{}}2xv_{(0)}(2u_{(0)}v_{(1)}+v_{(0)}u_{(1)}+4u_{(0)})-2v_{(1)}\right),\allowdisplaybreaks\\
\Lambda^2_1&=2tu_{(0),xx}-2(3tu_{(0)}v_{(0)}-x)u_{(0),x}+(3tu_{(0)}v_{(0)}-2x)u_{(0)}^2v_{(0)}\nonumber\allowdisplaybreaks\\
&+\varepsilon\left(2tu_{(1),xx}-6t(u_{(0)}v_{(1)}+v_{(0)}u_{(1)}+2u_{(0)})u_{(0),x}\phantom{\frac{}{}}\right.\nonumber\allowdisplaybreaks\\
&-2(3tu_{(0)}v_{(0)}-x)u_{(1),x}+3tu_{(0)}^2v_{(0)}(2u_{(0)}v_{(1)}+3v_{(0)}u_{(1)}+4u_{(0)})\nonumber\allowdisplaybreaks\\
&\left.-\phantom{\frac{}{}}2xu_{(0)}(u_{(0)}v_{(1)}+2v_{(0)}u_{(1)}+2u_{(0)})\right),\nonumber
\end{align}
with
\begin{align}
\Phi^t_1&=2tu_{(0),x}v_{(0),x}-\frac{(3tu_{(0)}v_{(0)}-2x)^2}{4}u_{(0)}^2v_{(0)}v_{(0),x}\nonumber\\
&-\frac{(3tu_{(0)}^2v_{(0)}^2-2xu_{(0)}v_{(0)}-4)(3tu_{(0)}v_{(0)}-2x)}{4}v_{(0)}u_{(0),x}\nonumber\allowdisplaybreaks\\
&+\varepsilon\left(2t(u_{(0),x}v_{(1),x}+v_{(0),x}u_{(1),x})\phantom{\frac{1}{1}}\right.\nonumber\allowdisplaybreaks\\
&-\left(\frac{9}{4}(4u_{(0)}v_{(1)}+3v_{(0)}u_{(1)}+8u_{(0)})t^2u_{(0)}^2v_{(0)}^3\right.\nonumber\allowdisplaybreaks\\
&-3(3u_{(0)}v_{(1)}+2v_{(0)}u_{(1)}+6u_{(0)})txu_{(0)}v_{(0)}^2\nonumber\allowdisplaybreaks\\
&+(2u_{(0)}v_{(1)}+v_{(0)}u_{(1)}+4u_{(0)})x^2v_{(0)}\nonumber\allowdisplaybreaks\\
&\left.-\phantom{\frac{1}{1}}(6u_{(0)}v_{(1)}+3v_{(0)}u_{(1)}+8u_{(0)})tv_{(0)}+2xv_{(1)}\right)u_{(0),x}\nonumber\\
&-\left(\frac{9}{4}(3u_{(0)}v_{(1)}+4v_{(0)}u_{(1)}+6u_{(0)})t^2u_{(0)}^3v_{(0)}^2\right.\nonumber\allowdisplaybreaks\\
&-.3(2u_{(0)}v_{(1)}+3v_{(0)}u_{(1)}+4u_{(0)})txu_{(0)}^2v_{(0)}
\nonumber\\
&\left.+\phantom{\frac{1}{1}}(u_{(0)}v_{(1)}+2v_{(0)}u_{(1)}+2u_{(0)})x^2u_{(0)}+2tu_{(0)}^2\right)v_{(0),x}\nonumber\allowdisplaybreaks\\
&-\frac{(3tu_{(0)}^2v_{(0)}^2-2xu_{(0)}v_{(0)}-4)(3tu_{(0)}v_{(0)}-2x)}{4}v_{(0)}u_{(1),x}\nonumber\\
&-\left.\frac{(3tu_{(0)}v_{(0)}-2x)^2}{4}u_{(0)}^2v_{(0)}v_{(1),x}\right),\allowdisplaybreaks\\
\Phi^x_1&=-2t(u_{(0),t}v_{(0),x}+v_{(0),t}u_{(0),x})\nonumber\allowdisplaybreaks\\
&+\frac{(3tu_{(0)}^2v_{(0)}^2-2xu_{(0)}v_{(0)}-4)(3tu_{(0)}v_{(0)}-2x)}{4}v_{(0)}u_{(0),t}\nonumber\allowdisplaybreaks\\
&+\frac{(3tu_{(0)}v_{(0)}-2x)^2}{4}u_{(0)}^2v_{(0)}v_{(0),t}-\frac{t}{2}(v_{(0)}^2u_{(0),x}^2+u_{(0)}^2v_{(0),x}^2)\nonumber\allowdisplaybreaks\\
&+(tu_{(0)}v_{(0)}-x)u_{(0),x}v_{(0),x}-\frac{3tu_{(0)}v_{(0)}-2x}{2}u_{(0)}^2v_{(0)}v_{(0),x}\nonumber\allowdisplaybreaks\\
&+\frac{3tu_{(0)}^2v_{(0)}^2-2xu_{(0)}v_{(0)}-2}{2}v_{(0)}u_{(0),x}\nonumber\allowdisplaybreaks\\
&-\varepsilon\left(2t(u_{(0),t}v_{(1),x}+v_{(0),t}u_{(1),x})\phantom{\frac{1}{2}}\right.\nonumber\allowdisplaybreaks\\
&+\frac{9}{4}(4u_{(0)}v_{(1)}+3v_{(0)}u_{(1)}+8u_{(0)})t^2u_{(0)}^2v_{(0)}^3u_{(0),t}\nonumber\allowdisplaybreaks\\
&+\frac{9}{4}(3u_{(0)}v_{(1)}+4v_{(0)}u_{(1)}+6u_{(0)})t^2u_{(0)}^3v_{(0)}^2v_{(0),t}-t((v_{(1)}+2)v_{(0)}u_{(0),x}^2\nonumber\allowdisplaybreaks\\
&-(u_{(0)}v_{(1)}+v_{(0)}u_{(1)}+2u_{(0)})u_{(0),x}v_{(0),x}+u_{(0)}u_{(1)}v_{(0),x}^2)\nonumber\allowdisplaybreaks\\
&-2t(u_{(0),x}v_{(1),t}+v_{(0),x}u_{(1),t})-t(v_{(0)}^2u_{(0),x}u_{(1),x}+u_{(0)}^2v_{(0),x}v_{(1),x})\nonumber\allowdisplaybreaks\\
&+(tu_{(0)}v_{(0)}-x)(u_{(0),x}v_{(1),x}+v_{(0),x}u_{(1),x})\nonumber\allowdisplaybreaks\\
&+\left(\frac{3}{2}(3u_{(0)}v_{(1)}+2v_{(0)}u_{(1)}+6u_{(0)})tu_{(0)}v_{(0)}^2\right.\nonumber\allowdisplaybreaks\\
&\left.-\phantom{\frac{}{}}(2u_{(0)}v_{(1)}+v_{(0)}u_{(1)}+4u_{(0)})xu_{(0)}-v_{(1)}\right)u_{(0),x}\nonumber\allowdisplaybreaks\\
&-\left(\frac{3}{2}(2u_{(0)}v_{(1)}+3v_{(0)}u_{(1)}+4u_{(0)})tu_{(0)}^2v_{(0)}\right.\nonumber\allowdisplaybreaks\\
&\left.-\phantom{\frac{}{}}(u_{(0)}v_{(1)}+2v_{(0)}u_{(1)}+2u_{(0)})xu_{(0)}\right)v_{(0),x}\nonumber\allowdisplaybreaks\\
&+\frac{(3tu_{(0)}^2v_{(0)}^2-2xu_{(0)}v_{(0)}-4)(3tu_{(0)}v_{(0)}-2x)}{4}v_{(0)}u_{(1),t}\nonumber\allowdisplaybreaks\\
&+\frac{3tu_{(0)}^2v_{(0)}^2-2xu_{(0)}v_{(0)}-2}{2}v_{(0)}u_{(1),x}\nonumber\allowdisplaybreaks\\
&+\left.\frac{(3tu_{(0)}v_{(0)}-2x)^2}{4}u_{(0)}^2v_{(0)}v_{(1),t}-\frac{3tu_{(0)}v_{(0)}-2x}{2}u_{(0)}^2v_{(0)}v_{(1),x}\right);\nonumber\allowdisplaybreaks
\end{align}

\item
\begin{equation}
\begin{aligned}
\Lambda^1_2&=2v_{(0),xx}+6u_{(0)}v_{(0)}v_{(0),x}+3u_{(0)}^2v_{(0)}^3\\
&+\varepsilon\left(2v_{(1),xx}+6(u_{(0)}v_{(1)}+v_{(0)}u_{(1)}+2u_{(0)})v_{(0),x}\phantom{\frac{}{}}\right.\\
&+\left.6u_{(0)}v_{(0)}v_{(1),x}\phantom{\frac{}{}}+3(3u_{(0)}v_{(1)}+2v_{(0)}u_{(1)}+6u_{(0)})u_{(0)}v_{(0)}^2\right),\\
\Lambda^2_2&=2u_{(0),xx}-6u_{(0)}v_{(0)}u_{(0),x}+3u_{(0)}^3v_{(0)}^2\\
&+\varepsilon\left(2u_{(1),xx}-6(u_{(0)}v_{(1)}+v_{(0)}u_{(1)}+2u_{(0)})u_{(0),x}\phantom{\frac{}{}}\right.\\
&-\left.6u_{(0)}v_{(0)}u_{(1),x}\phantom{\frac{}{}}+3(2u_{(0)}v_{(1)}+3v_{(0)}u_{(1)}+4u_{(0)})u_{(0)}^2v_{(0)}\right),
\end{aligned}
\end{equation}
with
\begin{align}
\Phi^t_2&=2u_{(0),x}v_{(0),x}-\frac{3}{2}(3tu_{(0)}v_{(0)}-2x)u_{(0)}^2v_{(0)}^3u_{(0),x}\nonumber\allowdisplaybreaks\\
&-\frac{3}{2}(3tu_{(0)}^2v_{(0)}^2-2xu_{(0)}v_{(0)}+2)u_{(0)}^2v_{(0)}v_{(0),x}\nonumber\allowdisplaybreaks\\
&+\varepsilon\left(2(u_{(0),x}v_{(1),x}+v_{(0),x}u_{(1),x})\phantom{\frac{1}{2}}\right.\nonumber\allowdisplaybreaks\\
&-\left(\frac{9}{2}(4u_{(0)}v_{(1)}+3v_{(0)}u_{(1)}+8u_{(0)})tu_{(0)}^2v_{(0)}^3\right.\nonumber\allowdisplaybreaks\\
&-\left.\phantom{\frac{}{}}3(3u_{(0)}v_{(1)}+2v_{(0)}u_{(1)}+6u_{(0)})xu_{(0)}v_{(0)}^2-12u_{(0)}v_{(0)}\right)u_{(0),x}\nonumber\allowdisplaybreaks\\
&-\left(\frac{9}{2}(3u_{(0)}v_{(1)}+4v_{(0)}u_{(1)}+6u_{(0)})tu_{(0)}^3v_{(0)}^2\right.\nonumber\allowdisplaybreaks\\
&-3(2u_{(0)}v_{(1)}+3v_{(0)}u_{(1)}+4u_{(0)})xu_{(0)}^2v_{(0)}\nonumber\allowdisplaybreaks\\
&\left.+\phantom{\frac{}{}}3(u_{(0)}v_{(1)}+2v_{(0)}u_{(1)})u_{(0)}\right)v_{(0),x}-\frac{3}{2}(3tu_{(0)}v_{(0)}-2x)u_{(0)}^2v_{(0)}^3u_{(1),x}\nonumber\allowdisplaybreaks\\
&-\left.\frac{3}{2}(3tu_{(0)}^2v_{(0)}^2-2xu_{(0)}v_{(0)}-2)u_{(0)}^2v_{(0)}v_{(1),x}\right),\allowdisplaybreaks\\
\Phi^x_2&=-2(u_{(0),t}v_{(0),x}+v_{(0),t}u_{(0),x})+\frac{3}{2}(3tu_{(0)}v_{(0)}-2x)u_{(0)}^2v_{(0)}^3u_{(0),t}\nonumber\allowdisplaybreaks\\
&+\frac{3}{2}(3tu_{(0)}^2v_{(0)}^2-2xu_{(0)}v_{(0)}+2)u_{(0)}^2v_{(0)}v_{(0),t}\nonumber\allowdisplaybreaks\\
&-\frac{1}{2}(v_{(0)}u_{(0),x}-u_{(0)}v_{(0),x})^2+\frac{3}{2}u_{(0)}^2v_{(0)}^2(v_{(0)}u_{(0),x}-u_{(0)}v_{(0),x})\nonumber\allowdisplaybreaks\\
&+\varepsilon\left(-2(u_{(0),t}v_{(1),x}+v_{(0),t}u_{(1),x})\phantom{\frac{1}{2}}\right.\nonumber\allowdisplaybreaks\\
&+\left(\frac{9}{2}(4u_{(0)}v_{(1)}+3v_{(0)}u_{(1)}+8u_{(0)})tu_{(0)}^2v_{(0)}^3\right.\nonumber\allowdisplaybreaks\\
&-\left.\phantom{\frac{}{}}3(3u_{(0)}v_{(1)}+2v_{(0)}u_{(1)}+6u_{(0)})xu_{(0)}v_{(0)}^2-12u_{(0)}v_{(0)}\right)u_{(0),t}\nonumber\allowdisplaybreaks\\
&+\left(\frac{9}{2}(3u_{(0)}v_{(1)}+4v_{(0)}u_{(1)}+6u_{(0)})tu_{(0)}^3v_{(0)}^2\right.\nonumber\allowdisplaybreaks\\
&-3(2u_{(0)}v_{(1)}+3v_{(0)}u_{(1)}+4u_{(0)})xu_{(0)}^2v_{(0)}\nonumber\allowdisplaybreaks\\
&+\left.\phantom{\frac{}{}}3(u_{(0)}v_{(1)}+2v_{(0)}u_{(1)})u_{(0)}\right)v_{(0),t}
-2(u_{(0),x}v_{(1),t}+v_{(0),x}u_{(1),t})\nonumber\allowdisplaybreaks\\
&-(v_{(0)}u_{(0),x}-u_{(0)}v_{(0),x})((v_{(1)}+2)u_{(0),x}-u_{(1)}v_{(0),x})\nonumber\allowdisplaybreaks\\
&+(v_{(0)}u_{(0),x}-u_{(0)}v_{(0),x})(u_{(0)}v_{(1),x}-v_{(0)}u_{(1),x})\nonumber\allowdisplaybreaks\\
&+\frac{3}{2}\left((3u_{(0)}v_{(1)}+2v_{(0)}u_{(1)}+6u_{(0)})u_{(0)}v^2_{(0)}u_{(0),x}+\phantom{\frac{1}{1}}\right.\nonumber\allowdisplaybreaks\\
&-(2u_{(0)}v_{(1)}+3v_{(0)}u_{(1)}+4u_{(0)})u_{(0)}^2v_{(0)}v_{(0),x}\nonumber\allowdisplaybreaks\\
&+(3tu_{(0)}v_{(0)}-2x)u_{(0)}^2v_{(0)}^3u_{(1),t}+u_{(0)}^2v_{(0)}^2(v_{(0)}u_{(1),x}-u_{(0)}v_{(1),x})\nonumber\allowdisplaybreaks\\
&\left.\left.+\phantom{\frac{1}{2}}(3tu_{(0)}^2v_{(0)}^2-2xu_{(0)}v_{(0)}+2)u_{(0)}^2v_{(0)}v_{(1),t}\right)\right);\nonumber\allowdisplaybreaks
\end{align}

\item
\begin{equation}
\begin{aligned}
\Lambda^1_3&=v_{(0),x}+u_{(0)}v_{(0)}^2\\
&+\varepsilon\left(v_{(1),x}+(u_{(1)}v_{(0)}+2u_{(0)}v_{(1)}+4u_{(0)})v_{(0)}\right),\\
\Lambda^2_3&=-u_{(0),x}+u_{(0)}^2v_{(0)}\\
&-\varepsilon\left(u_{(1),x}-\left(2u_{(1)}v_{(0)}+2u_{(0)}+u_{(0)}v_{(1)}\right)u_{(0)}\right),
\end{aligned}
\end{equation}
with
\begin{align}
\Phi^t_3&=\frac{1}{2}\left((3tu_{(0)}v_{(0)}-2x)u_{(0)}v_{(0)}^2u_{(0),x}\right.\nonumber\allowdisplaybreaks\\
&+\left.(3tu_{(0)}^2v_{(0)}^2-2xu_{(0)}v_{(0)}+2)u_{(0)}v_{(0),x}\right)\nonumber\allowdisplaybreaks\\
&+\varepsilon\left(\left(\frac{3}{2}(3u_{(0)}v_{(1)}+2u_{(1)}v_{(0)}+6u_{(0)})tu_{(0)}v_{(0)}^2\right.\right.\nonumber\allowdisplaybreaks\\
&\left.-\phantom{\frac{1}{2}}(2u_{(0)}v_{(1)}+u_{(1)}v_{(0)}+4u_{(0)})xv_{(0)}\right)u_{(0),x}\nonumber\allowdisplaybreaks\\
&+\left(\frac{3}{2}(2u_{(0)}v_{(1)}+3u_{(1)}v_{(0)}+4u_{(0)})tu_{(0)}^2v_{(0)}\right.\nonumber\allowdisplaybreaks\\
&\left.-\phantom{\frac{1}{2}}(u_{(0)}v_{(1)}+2u_{(1)}v_{(0)}+2u_{(0)})xu_{(0)}+u_{(1)}\right)v_{(0),x}\nonumber\allowdisplaybreaks\\
&+\frac{1}{2}(3tu_{(0)}v_{(0)}-2x)u_{(0)}v_{(0)}^2u_{(1),x}\nonumber\allowdisplaybreaks\\
&+\left.\frac{1}{2}(3tu_{(0)}^2v_{(0)}^2-2xu_{(0)}v_{(0)}+2)u_{(0)}v_{(1),x}\right),\allowdisplaybreaks\\
\Phi^x_3&=\frac{1}{2}\left((2x-3tu_{(0)}v_{(0)})u_{(0)}v_{(0)}^2u_{(0),t}-(v_{(0),x}+u_{(0)}v_{(0)}^2)u_{(0),x}\right.\nonumber\allowdisplaybreaks\\
&-\left.(3tu_{(0)}^2v_{(0)}^2-2xu_{(0)}v_{(0)}+2)u_{(0)}v_{(0),t}+u_{(0)}^2v_{(0)}v_{(0),x}\right)\nonumber\allowdisplaybreaks\\
&-\varepsilon\left(\left(\frac{3}{2}(3u_{(0)}v_{(1)}+2u_{(1)}v_{(0)}+6u_{(0)})tu_{(0)}v_{(0)}^2\right.\right.\nonumber\allowdisplaybreaks\\
&\left.-\phantom{\frac{1}{2}}(2u_{(0)}v_{(1)}+u_{(1)}v_{(0)}+4u_{(0)})xv_{(0)}\right)u_{(0),t}\nonumber\allowdisplaybreaks\\
&+\left(\frac{3}{2}(2u_{(0)}v_{(1)}+3u_{(1)}v_{(0)}+4u_{(0)})tu_{(0)}^2v_{(0)}\right.\nonumber\allowdisplaybreaks\\
&\left.-\phantom{\frac{1}{2}}(u_{(0)}v_{(1)}+2u_{(1)}v_{(0)}+2u_{(0)})xu_{(0)}+u_{(1)}\right)v_{(0),t}\nonumber\allowdisplaybreaks\\
&+\frac{1}{2}(v_{(1),x}+(2u_{(0)}v_{(1)}+u_{(1)}v_{(0)}+4u_{(0)})v_{(0)})u_{(0),x}\nonumber\allowdisplaybreaks\\
&-\frac{1}{2}(u_{(0)}v_{(1)}+2u_{(1)}v_{(0)}+2u_{(0)})u_{(0)}v_{(0),x}\nonumber\allowdisplaybreaks\\
&+\frac{1}{2}(3tu_{(0)}v_{(0)}-2x)u_{(0)}v_{(0)}^2u_{(1),t}+\frac{1}{2}(v_{(0),x}+u_{(0)}v_{(0)}^2)u_{(1),x}\nonumber\allowdisplaybreaks\\
&+\left.\frac{1}{2}(3tu_{(0)}^2v_{(0)}^2-2xu_{(0)}v_{(0)}+2)u_{(0)}v_{(1),t}-\frac{1}{2}u_{(0)}^2v_{(0)}v_{(1),x}\right);\nonumber\allowdisplaybreaks
\end{align}

\item
\begin{equation}
\begin{aligned}
\Lambda^1_4&=v_{(0)}+\varepsilon v_{(1)},\qquad\Lambda^2_4=u_{(0)}+\varepsilon u_{(1)},
\end{aligned}
\end{equation}
with
\begin{align}
\Phi^t_4&=\frac{1}{2}(3tu_{(0)}v_{(0)}-2x)(v_{(0)}u_{(0),x}+u_{(0)}v_{(0),x})\nonumber\allowdisplaybreaks\\
&+\frac{\varepsilon}{2}\left(((6u_{(0)}v_{(1)}+3v_{(0)}u_{(1)}+8u_{(0)})tv_{(0)}-2xv_{(1)})u_{(0),x}\phantom{\frac{}{}}\right.\nonumber\allowdisplaybreaks\\
&+((3u_{(0)}v_{(1)}+6v_{(0)}u_{(1)}+4u_{(0)})tu_{(0)}-2xu_{(1)})v_{(0),x}\nonumber\allowdisplaybreaks\\
&\left.+\phantom{\frac{}{}}(3tu_{(0)}v_{(0)}-2x)(v_{(0)}u_{(1),x}+u_{(0)}v_{(1),x})\right),\\
\Phi^x_4&=-\frac{1}{2}\left((3tu_{(0)}v_{(0)}-2x)(v_{(0)}u_{(0),t}+u_{(0)}v_{(0),t})+v_{(0)}u_{(0),x}-u_{(0)}v_{(0),x}\right)\nonumber\allowdisplaybreaks\\
&-\frac{\varepsilon}{2}\left(((6u_{(0)}v_{(1)}+3v_{(0)}u_{(1)}+8u_{(0)})tv_{(0)}-2xv_{(1)})u_{(0),t}\phantom{\frac{}{}}\right.\nonumber\allowdisplaybreaks\\
&+((3u_{(0)}v_{(1)}+6v_{(0)}u_{(1)}+4u_{(0)})tu_{(0)}-2xu_{(1)})v_{(0),t}\nonumber\allowdisplaybreaks\\
&+(3tu_{(0)}v_{(0)}-2x)(v_{(0)}u_{(1),t}+u_{(0)}v_{(1),t})\nonumber\allowdisplaybreaks\\
&\left.+\phantom{\frac{}{}}v_{(0)}u_{(1),x}-u_{(0)}v_{(1),x}+v_{(1)}u_{(0),x}-u_{(1)}v_{(0),x}\right);\nonumber
\end{align}
\item
\begin{equation}
\Lambda^1_5=1,\qquad \Lambda^2_5=0,
\end{equation}
with
\begin{align}
\Phi^t_5&=(tu_{(0)}v_{(0)}-x)u_{(0),x}+\frac{1}{2}tu_{(0)}^2v_{(0),x}\nonumber\allowdisplaybreaks\\
&+\varepsilon\left(t(u_{(0)}v_{(1)}+u_{(1)}v_{(0)}+2u_{(0)})u_{(0),x}+tu_{(0)}u_{(1)}v_{(0),x}\phantom{\frac{1}{2}}\right.\nonumber\allowdisplaybreaks\\
&+\left.(tu_{(0)}v_{(0)}-x)u_{(1),x}+\frac{1}{2}tu_{(0)}^2v_{(1),x}\right),\\
\Phi^x_5&=-(tu_{(0)}v_{(0)}-x)u_{(0),t}-\frac{1}{2}tu_{(0)}^2v_{(0),t}-\frac{1}{2}u_{(0),x}\nonumber\allowdisplaybreaks\\
&-\varepsilon\left(t(u_{(0)}v_{(1)}+u_{(1)}v_{(0)}+2u_{(0)})u_{(0),t}+tu_{(0)}u_{(1)}v_{(0),t}\phantom{\frac{1}{2}}\right.\nonumber\allowdisplaybreaks\\
&+\left.(tu_{(0)}v_{(0)}-x)u_{(1),t}+\frac{1}{2}tu_{(0)}^2v_{(1),t}+\frac{1}{2}u_{(1),x}\right);\nonumber
\end{align}

\item
\begin{equation}
\Lambda^1_6=0,\qquad\Lambda^2_6=1,
\end{equation}
with
\begin{align}
\Phi^t_6&=\frac{1}{2}tv_{(0)}^2u_{(0),x}+(tu_{(0)}v_{(0)}-x)v_{(0),x}\nonumber\allowdisplaybreaks\\
&+\varepsilon\left(t(v_{(1)}+2)v_{(0)}u_{(0),x}+t(u_{(0)}v_{(1)}+u_{(1)}v_{(0)}+2u_{(0)})v_{(0),x}\phantom{\frac{1}{2}}\right.\nonumber\allowdisplaybreaks\\
&+\left.\frac{1}{2}tv_{(0)}^2u_{(1),x}+(tu_{(0)}v_{(0)}-x)v_{(1),x}\right),\\
\Phi^x_6&=-\frac{1}{2}tv_{(0)}^2u_{(0),t}-(tu_{(0)}v_{(0)}-x)v_{(0),t}+\frac{1}{2}v_{(0),x}\nonumber\allowdisplaybreaks\\
&-\varepsilon\left(t(v_{(1)}+2)v_{(0)}u_{(0),t}+t(u_{(0)}v_{(1)}+u_{(1)}v_{(0)}+2u_{(0)})v_{(0),t}\phantom{\frac{1}{2}}\right.\nonumber\allowdisplaybreaks\\
&+\left.\frac{1}{2}tv_{(0)}^2u_{(1),t}+(tu_{(0)}v_{(0)}-x)v_{(1),t}-\frac{1}{2}{v_{(1),x}}\right).\nonumber
\end{align}
\end{itemize}
The remaining approximate multipliers do not produce new independent approximate conservation laws; in fact, we obtain:
\begin{equation}
\Lambda^i_{j+6}=\varepsilon \Lambda^i_j,
\end{equation}
with
\begin{equation}
\Phi^t_{j+6}=\varepsilon\Phi^t_j,\qquad \Phi^x_{j+6}=\varepsilon\Phi^x_j,
\end{equation}
where $i=1,2$ and $j=1,\dots,6$.
\section{Conclusions}\label{sec:6}
In this paper, we considered some perturbed differential equations not arising from a variational principle; by adapting some ideas recently proposed for dealing with approximate Lie symmetries of differential equations containing terms with a small order of magnitude \cite{DSGO-2018}, we proposed to extend to the approximate framework the direct method \cite{AncoBluman2002a,AncoBluman2002b} for obtaining conservation laws. 
Consistently with the approach \cite{DSGO-2018}, approximate multipliers, {assumed to be dependent on the small parameter}, have been defined, and the new approximate direct procedure has been exploited. 

A formulation of approximate multipliers, though within a different approach, has been given in \cite{Jamal-2019}. Therein, the dependent variables are not expanded in power series, and the expansion of Lagrange multipliers is not equal to the one here considered.

Here, after defining the approximate multipliers accordingly with the approach proposed in \cite{DSGO-2018}, a theorem ensuring the conditions allowing to recover the approximate multipliers has been proved, and, as a by--product, an approximate Euler operator has been used. 
We want to remark that this formulation of approximate multiplier and approximate Euler operator allow us to recover sets of approximate conservation laws that are coherent with perturbation analysis. In order to emphasize the methodological differences between the methods for approximate conservation laws already introduced in \cite{Jamal-2019}, which are in the framework of Baikov--Gazizov--Ibragimov \cite{BGI-1988} and Fushchich--Shtelen \cite{FS-1989} approaches for approximate symmetries, with the procedure here proposed, an example exhibiting the results with the three different methods is analyzed.

We also applied the general framework to some perturbed problems of physical interest not possessing a variational formulation. By limiting to first order approximation, we considered the KdV--Burgers equation, a nonlinear wave equation, two nonlinear Schr\"{o}dinger equations, and a generalized Kaup--Newell equation. For each example, approximate multipliers and the corresponding approximate conserved quantities have been derived. All the needed computations have been done by means of the program ReLie \cite{Oliveri-relie} written in the Computer Algebra System Reduce \cite{Reduce}. 

Work is in progress in order to apply the method proposed in \cite{DSGO-2018} to the derivation of new approaches for the construction of approximate conservation laws, when dealing with differential equations involving small terms not arising from a variational principle. In particular, we plan to investigate the 
self--adjoint approach \cite{Ibragimov2007}, and the methods of partial Lagrangians \cite{Kara-Mahomed-2006} and quasi Lagrangians \cite{Rosenhaus-Shankar-2019} into the approximate context. These extensions and generalizations will be the object of forthcoming papers.

\section*{Acknoledgments}
This work was supported by ``Gruppo Nazionale per la Fisica Matematica'' (GNFM) of ``Istituto Nazionale di Alta Matematica''. M.G. acknowledges the support through the ``Progetto Giovani No. GNFM 2020''.

\end{document}